\newtheorem{theorem}{Theorem}
\newcommand{\RNum}[1]{\uppercase\expandafter{\romannumeral #1\relax}}
\newtheorem{lemma}{Lemma}
\newtheorem{definition}{Definition}
\newtheorem{corollary}{Corollary}
\definecolor{lime}{HTML}{A6CE39}
\titlespacing{\section}{0pt}{1.2ex plus .0ex minus .0ex}{.3ex plus .0ex}
\titlespacing{\subsection}{0pt}{1.2ex plus .0ex minus .0ex}{.3ex plus .0ex}
\DeclareRobustCommand{\orcidicon}{%
	\begin{tikzpicture}
		\draw[lime, fill=lime] (0,0) 
		circle [radius=0.16] 
		node[white] {{\fontfamily{qag}\selectfont \tiny ID}};    \draw[white, fill=white] (-0.0625,0.095) 
		circle [radius=0.007];    \end{tikzpicture}
	\hspace{-2mm}}
\xdef\csname orcid\x\endcsname{\noexpand\href{https://orcid.org/\csname orcidauthor\x\endcsname}{\noexpand\orcidicon}}
\newcommand*\bigcdot{\mathpalette\bigcdot@{.5}}
\newcommand*\bigcdot@[2]{\mathbin{\vcenter{\hbox{\scalebox{#2}{$\m@th#1\bullet$}}}}}
\begin{document}
		\title{Sampling to Achieve the Goal: An Age-aware Remote Markov Decision Process \vspace{-0.5em}}
		\author{Aimin Li\IEEEauthorrefmark{1}\IEEEauthorrefmark{3},
			Shaohua Wu\IEEEauthorrefmark{1}\IEEEauthorrefmark{2},
			Gary C.F. Lee\IEEEauthorrefmark{3}, 
			Xiaomeng Chen\IEEEauthorrefmark{1}, 
			and Sumei Sun\IEEEauthorrefmark{3},
			\emph{Fellow, IEEE}\\
			\IEEEauthorrefmark{1}\textit{Harbin Institute of Technology (Shenzhen), China} \\
			\IEEEauthorrefmark{3}\textit{Institute for Infocomm Research, Agency for Science, Technology and Research, Singapore}\\
			\IEEEauthorrefmark{2}\textit{The Department of Boradband Communication, Peng Cheng Laboratory, Shenzhen, China}\\
			
			\textit{E-mail: \{liaimin,23s052026\}@stu.hit.edu.cn; hitwush@hit.edu.cn; \{Gary\_Lee,sunsm\}@i2r.a-star.edu.sg}
			\vspace{-2em}
			
			\thanks{
				This work has been supported in part by the National Key Research and Development Program of China under Grant no. 2020YFB1806403, and in part by the Guangdong Basic and Applied Basic Research Foundation under Grant no. 2022B1515120002, and in part by the Major Key Project of PCL under Grant no. PCL2024A01.
			}
		}
		
		\maketitle
		\allowdisplaybreaks
		
		\begin{abstract}
			Age of Information (AoI) has been recognized as an important metric to measure the freshness of information. Central to this consensus is that minimizing AoI can enhance the freshness of information, thereby facilitating the accuracy of subsequent decision-making processes. However, to date the direct causal relationship that links AoI to the utility of the decision-making process is unexplored. To fill this gap, this paper proposes a sampling-control co-design problem, referred to as an \textit{age-aware remote Markov Decision Process} (MDP) problem, to explore this unexplored relationship. Our framework revisits the sampling problem in \cite{DBLP:journals/tit/SunUYKS17} with a refined focus: moving from AoI penalty minimization to directly optimizing goal-oriented remote decision-making process under random delay. We derive that the \textit{age-aware remote MDP problem} can be reduced to a standard MDP problem without delays, and reveal that treating AoI solely as a metric for optimization is not optimal in achieving remote decision making. Instead, AoI can serve as important \textit{side information} to facilitate remote decision making. 
			
			The code is available at \href{https://github.com/AiminLi-Hi/Age-Aware-Remote-MDP}{\textcolor{black}{\textit{https://github.com/AiminLi-Hi/Age-Aware-Remote-MDP}}}.
		\end{abstract}
		\begin{IEEEkeywords}
			Age of Information, Markov Decision Process, Goal-oriented Communications, Remote Communication-Control Co-Design.
		\end{IEEEkeywords}
		
		\IEEEpeerreviewmaketitle
		
		\section{Introduction}\label{sectionI}
		
		Markov Decision Process (MDP) has been a general framework for treating the sequential stochastic control problem \cite{howard1960dynamic,bellman1966dynamic}, and has been applied as an efficient theoretical framework for healthcare management, transportation scheduling, industrial production and automation, response and rescue systems, financial modeling, and \textit{etc}. \cite{boucherie2017markov}. Typically, a standard MDP framework assumes immediate access to the current state information, and the decision maker chooses actions based on the available \textit{delay-free} state of the system to achieve a specific goal. This idealization, however, may not hold in many practical scenarios. For instance, in a remote healthcare management system, the monitored patient's condition might be delayed for subsequent healthcare operations. In the Industrial Internet of Things, the transmission of critical safety data to the decision center might be subject to various network delays. These highlight the need for extending the standard MDP to the MDP with observation delays \cite{DBLP:conf/sigmetrics/AltmanN92,DBLP:journals/tac/KatsikopoulosE03}.
		

		There are two types of MDP that consider the observation delay, termed deterministic delayed MDP (DDMDP) \cite{DBLP:conf/sigmetrics/AltmanN92} and stochastic delayed MDP (SDMDP) \cite{DBLP:journals/tac/KatsikopoulosE03}. The DDMDP introduces a constant observation delay $d$ to the standard MDP framework. At any given time $t$, the decision-maker accesses the time-varying data as $O(t)=X_{t-d}$. The main result of the DDMDP problem is its reducibility to a standard MDP without delays through \textit{state augmentation}, as detailed by Altman and Nain \cite{DBLP:conf/sigmetrics/AltmanN92}. The SDMDP extends DDMDP by treating the observation delay not as a static constant but as a random variable $D$ following a given distribution $\Pr(D=d)$, with $O(t)=X_{t-D}$. In 2003, V. Katsikopoulos and E. Engelbrecht showed that an SDMDP is also reducible to a standard MDP problem without delay \cite{DBLP:journals/tac/KatsikopoulosE03}. Thus, it becomes clear to solve a SDMDP problem by solving its equivalent standard MDP.
		
		\begin{table}[t]\label{1}
			\centering
			\caption{Comparisons of Time-Lag MDPs}
			\begin{tabular}{ccc}
				\toprule
				\textbf{Type} & \textbf{Observation} & \textbf{Reference} \\
				\midrule
				MDP Without Delay & \( O(t) = X_t \) & \cite{bellman1966dynamic}\\
				DDMDP & \( O(t) = X_{t-d} \) & \cite{DBLP:conf/sigmetrics/AltmanN92} \\
				SDMDP & \( O(t) = X_{t-D} \) & \cite{DBLP:journals/tac/KatsikopoulosE03}\\
				Age-Aware Remote MDP & \( O(t) = X_{t-\Delta(t)} \) & {This Work}\\
				\bottomrule
			\end{tabular}
			\vspace{-4mm}
		\end{table}
		
		However, the above time-lag MDPs, where the observation delay follows a given distribution, potentially assumes that the state is sampled and transmitted to the decision maker at every time slot\footnote{In this case, each state $X_{i}, \forall i \in \{0,1,...n\}$ are all sampled and forwarded to the decision maker. The observation delay $D$ is i.i.d random variable and is independent of the sampling policy.}. This setup presumes that the system can transmit every state information without encountering any \textit{backlog}. In practice, constantly sampling and transmitting may result in infinitely accumulated packets in the queue, resulting in severe congestion. This motivates the need for queue control and adaptive sampling policy design in the network \cite{DBLP:conf/isit/Yates15,DBLP:journals/tit/SunUYKS17,DBLP:journals/tcom/ArafaBSP21,DBLP:journals/tit/TangCWYT23,panjiayu2023,BZJSSU}, where Age of Information (AoI) has emerged as an important indicator \cite{DBLP:conf/infocom/KaulYG12,DBLP:journals/ftnet/KostaPA17,DBLP:journals/jsac/YatesSBKMU21a,10105150}. {Currently, AoI has been applied in a wide range of applications such as queue control \cite{costa2016age,DBLP:journals/tcom/DoganA21,yates2018age,kam2015effect}, remote estimation \cite{sun2019samplingwiener,ornee2019sampling,DBLP:journals/tcom/ArafaBSP21,10.1145/3492866.3549732,tsai2021unifying,ornee2021sampling}, and communications \& network design\cite{li2022age,pan2022age,xie2020age,meng2022analysis,DBLP:journals/tcom/CaoZJWS21,DBLP:conf/globecom/Long0GLN22,DBLP:journals/tcom/FengWFCD24,DBLP:journals/tmc/PanCLL23,10539623}}. Suppose the $i$-th sample is generated at time $S_i$ and is delivered at the receiver at time $D_i$, AoI is defined as a \textit{sawtooth piecewise function}: \begin{equation}\label{eq1}\Delta(t)=t-S_i, D_{i}\le t< D_{i+1}, \forall i\in\mathbb{N},\end{equation}
		as shown in Fig. \ref{fig:figure1}. From this definition, the most recently available information at the receiver at time slot $t$ is $O(t)=X_{t-\Delta(t)}$. Different from the DDMDP and SDMDP where the time lag is a constant $d$ or an i.i.d random variable $D$, with $O(t)=X_{t-d}$ or $O(t)=X_{t-D}$, the time lag in the practical network with queue and finite transmission rate is a controlled random process $\Delta(t)$, with $O(t)=X_{t-\Delta(t)}$. See Table \ref{1} and Fig. \ref{fig:figure2} for the comparisons of different time-lag MDPs.
		
		\begin{figure}[!t]
			\centering
			\includegraphics[width=0.82\linewidth]{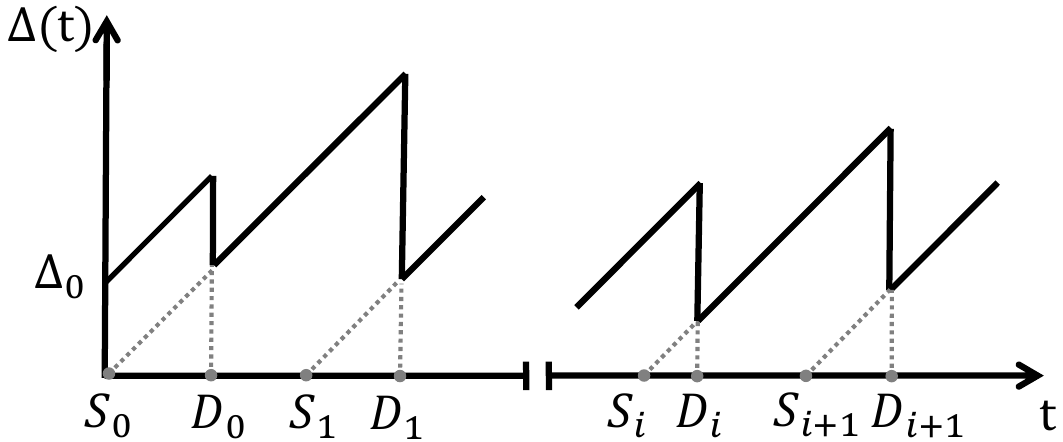}
			\caption{Evolution of the age $\Delta(t)$ over time.}
			\label{fig:figure1}
			\vspace{-3.5mm}
		\end{figure}
		
		Motivated by the above, this work enriches the time-lag MDP family by treating the observation delay not as a stationary stochastic variable $d$ or $D$, but as a dynamically controlled stochastic process, represented by \textit{age} $\Delta(t)$ defined in (\ref{eq1}). We refer to this problem as \textit{age-aware remote MDP} problem\footnote{In \cite{10273599}, the term \textit{remote MDP} was first proposed as a pathway to pragmatic or goal-oriented communications. Our paper follows this research and focuses on the effect from \textit{age}, hence the term \textit{age-aware remote MDP}.}, where AoI serves as no longer a typical indicator but important \textit{side information} for goal-oriented remote decision making. Our main result is that \textit{age-aware remote MDP}, like DDMDP \cite{DBLP:conf/sigmetrics/AltmanN92} and SDMDP \cite{DBLP:journals/tac/KatsikopoulosE03}, can also be reduced to a standard MDP problem with a constraint. We design efficient algorithms to solve this type of standard MDP with a constraint. \emph{To the best of our knowledge, this is the first work that introduces AoI into the time-lag MDP family, and the first work that explores AoI's new role as \textit{side information} to facilitate remote decision making under random delay.}

		\section{System Model and Problem Formulation}\label{section III}
		
		
		We consider a time-slotted \textit{age-aware remote MDP} problem illustrated in Fig. \ref{fig:figure2}(c). Let $X_t\in\mathcal{S}$ be the controlled source at time slot $t$. The evolution of the source is a Markov decision process, characterized by the transition probability $\Pr(X_{t+1}|X_t,a_t)$\footnote{For short-hand notations, we use the transition probability matrix $\mathbf{P}_a$ to encapsulate the dynamics of the source given an action $a_t=a$.}, where $a_t\in\mathcal{A}$ represents the action taken by the remote decision maker to control the source in the desired way. The sampler conducts the sampling action $a_t^S\in\{0,1\}$, with $a_t^S=1$ representing the sampling action and $a_t^S=0$ otherwise. Consider the random channel delay of the $i$-th packet as $Y_i\in\mathcal{Y}\subseteq\mathbb{N}^+$, which is independent of the source $X_t$ and is bounded $\max[{Y_i}]<\infty$. The sampling times $S_0,S_1,\cdots$ shown in Fig. \ref{fig:figure1} record the time stamp with $a_t^S=1$, 
		\begin{equation}\label{eq2}
			S_i=\max\{t\left|t\le D_i,a_t^S=1\right.\},
		\end{equation}
		where the initial state of the system is $S_0=0$ and $\Delta(0)=\Delta_0$.
		
		\begin{figure}[tbp]
			\centering
			\includegraphics[width=0.92\linewidth]{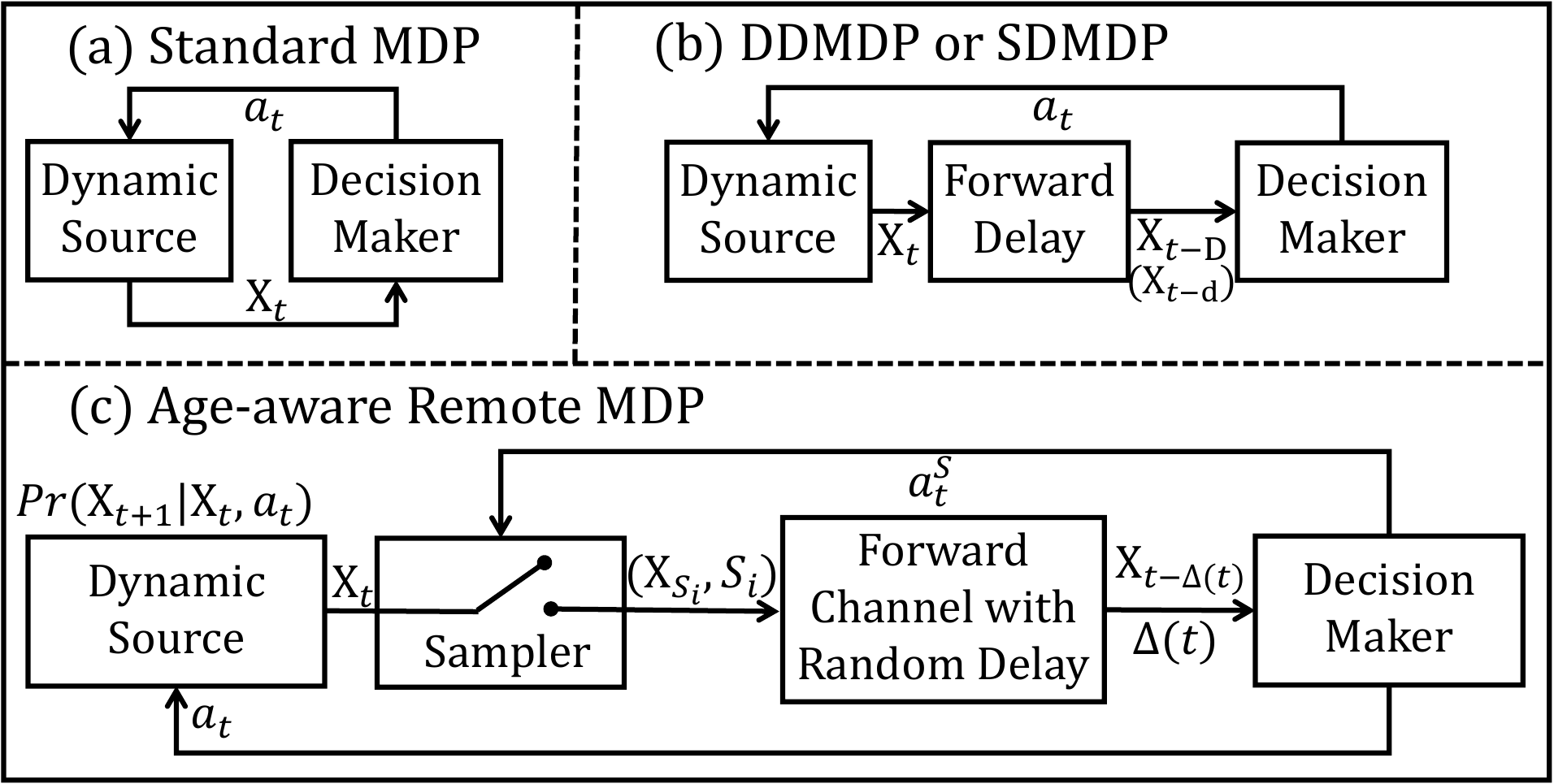}
			\caption{Comparisons among standard MDP, DDMDP, SDMDP, and \textit{age-aware remote MDP}.}
			\label{fig:figure2}
			\vspace{-5mm}
		\end{figure}
		
		At the sampling time $S_i,\forall i\in\mathbb{N}$, the state $X_{S_i}$ along with the corresponding time stamp $S_i$ are encapsulated into a packet $(X_{S_i},S_i)$, which is transmitted to a remote decision maker. Upon the reception of the packet $(X_{S_i},S_i)$ during $t\in[D_i,D_{i+1})$, the \textit{ observation history} at the decision maker is $\{(X_{S_j},S_j):{j\le i}\}$. By employing (\ref{eq1}), this sequence can be equivalently represented as $\{(X_{k-\Delta(k)},\Delta(k)):{k\le t}\}$.
		
		Denote $\mathcal{I}_t=\{(X_{k-\Delta(k)},\Delta(k),a_{k-1}^S,a_{k-1}):{k\le t}\}$ as the \textit{observation history} and \textit{action history} (or simply \textit{history}) available to the decision maker up to time $t$. The decision maker is tasked with determining both the sampling action and the controlled action $(a_t^S,a_t)\in\{0,1\}\times\mathcal{A}$ at each time slot $t$ by leveraging the \textit{history} $\mathcal{I}_t$. A decision policy of the decision maker is defined as a mapping from the \textit{history} to a distribution over the joint action space $\{0,1\}\times\mathcal{A}$, denoted by $\pi_t: \mathcal{I}_t\rightarrow\mathrm{Prob}( \mathcal{A}\times\{0,1\})$. Similar to \cite{DBLP:journals/tit/SunUYKS17}, we assume that the sampling policy satisfies two conditions: 
		\begin{itemize}
			\item $i$) No sample is taken when the channel is busy, \emph{i.e.}, $S_{i+1}\ge D_i$, or equivalently, \begin{equation}\label{eq4}
				S_{i+1}-Z_i=D_i \text{ with } Z_i\ge0, i\in\mathbb{N},
			\end{equation}
			with $Z_i$ representing the sampling waiting time. From this assumption, we can also obtain that $D_i=S_i+Y_i,\forall i\in\mathbb{N}$.
			\item $ii$) The inter-sample times $G_i=S_{i+1}-S_i$ is a \textit{regenerative process} \cite[Section 6.1]{haas2006stochastic}: There is a sequence $0\le g_1<g_2<\cdots$ of almost surely finite random integers such that the post-$g_j$ process $\{G_{{g_j}+i}\}_{i\in\mathbb{N}}$ has the same distribution as the post-$g_1$ process $\{G_{{g_1}+i}\}_{i\in\mathbb{N}}$ and is independent of the pre-$g_j$ process $\{G_{{g_1}+i}\}_{i\in\{0,1,\cdots k_j-1\}}$. Condition $ii$) implies that, almost surely \footnote{This assumption also implies that the waiting time $Z_i$ is bounded, belonging to a subset of nature numbers with $Z_i\in\mathcal{Z}\subseteq\mathbb{N}$.}
			\begin{equation}\label{eq3}
				\lim\limits_{i\rightarrow \infty}S_i=\infty, ~~~\lim\limits_{i\rightarrow \infty}D_i=\infty.
			\end{equation}
		\end{itemize}
		
		In addition, we assume that the controlled policy satisfies the following condition: the action $a_t$ is updated only upon the delivery of a sample $X_{S_i}$\footnote{Throughout the time interval $t \in [D_i, D_{i+1})$, the decision maker's observations are fixed and consist only of ${(X_{S_j},S_j) : j \le i}$. Thus, it is assumed that the chosen actions remain constant in this period. The possibility of varying these actions within such intervals will be our future work.}, \emph{i.e.}, \begin{equation}a_t= A_i, D_i\le t \le D_{i+1}, i\in\mathbb{N},\end{equation} where $A_i\in\mathcal{A}$ is the updated controlled action upon the delivery of packet $(X_{S_i},S_i)$. We consider bounded cost function $\mathcal{C}(X_t,a_t)<\infty$, which represents the immediate cost incurred when action $a_t$ is taken in state $X_t$. Under the above assumptions, the objective of the system is to design the optimal policies at each time slot $\pi_0, \pi_1,\pi_2\cdots$ to minimize the \textit{long-term average cost} \footnote{In Theorem \ref{the1}, we will discuss the unichain property of the MDP, therefore the problem is independent of the initial state $X_0$.}:
		\begin{equation}\label{eq5}
			\mathcal{P}1: \inf _{\pi_{0: \infty}} \limsup _{T \rightarrow \infty} \frac{1}{T} \mathbb{E}\left[\sum_{t=1}^T \mathcal{C}(X_t, a_t)\right].
		\end{equation} 
		This problem aims at determining the distribution of joint sampling and controlled actions $(a_t^S,a_t)$ based on the history $\mathcal{I}_t$, such that the \textit{long-term average cost} is minimized.
		
		
		\section{Optimal Sampling and Remote Decision Making Policy Under Random Delay}\label{sectionIV}
		
		\subsection{Sufficient Statistics of History}
		The policy $\pi_t$ is a mapping from $\mathcal{I}_t$ to the action provability space. One challenge to solving $\pi_t$ is that the \textit{history} space $\mathcal{I}_t$ explodes exponentially as $t$ increases. This motivates us to compress and abstract information that is necessary for the optimal decision process. Specifically, we will analyze the \textit{sufficient statistics} in this subsection.
		\begin{definition}\label{definition1}
			A sufficient statistics of $\mathcal{I}_t$ is a function $S_t(\mathcal{I}_t)$, such that $\min_{a_{t:T}}\mathbb{E}\left[\sum_{k=t}^T \mathcal{C}(X_k, a_k)|\mathcal{I}_t\right]=\min_{a_{t:T}}\mathbb{E}\left[\sum_{k=t}^T \mathcal{C}(X_k, a_k)|S_t(\mathcal{I}_t)\right]$ holds for any $T>t$. 
		\end{definition} 
		The above definition implies that the decision making based on the \textit{sufficient statistics} $H_t(\mathcal{I}_t)$ can achieve an equivalent performance as that dependent on $\mathcal{I}_t$. The following lemma introduces an important \textit{sufficient statistics} of $\mathcal{I}_t$.
		\begin{lemma}\label{l1}
			In our problem, during the interval $t\in [D_{i},D_{i+1})$, $\mathcal{G}_i=(X_{S_i},Y_i,A_{i-1})\in\mathcal{S}\times\mathcal{Y}\times\mathcal{A}$ is a sufficient statistics of $\mathcal{I}_t$. Besides, determining the optimal sampling actions $a_t^S$ under condition (\ref{eq4}) is equivalent to determining the optimal sampling time $S_{i+1}$, or the optimal waiting time $Z_i$.
		\end{lemma}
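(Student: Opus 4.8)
The plan is to prove the two claims separately. For the sufficiency of $\mathcal{G}_i$, I would show that the conditional law of the current true state $X_t$ given the full history $\mathcal{I}_t$, for any $t\in[D_i,D_{i+1})$, is a function only of $\mathcal{G}_i$, the control action $A_i$ acting on this interval, and the elapsed time $t-D_i$; the cost-to-go in Definition \ref{definition1} then inherits this dependence. The first step is to pin down which control actions drive the source between the last observed state $X_{S_i}$ and time $t$. From condition $i$) we have $S_{i+1}\ge D_i$, hence $S_i\ge D_{i-1}$, while $D_i=S_i+Y_i>S_i$ gives $S_i\in[D_{i-1},D_i)$. Since the control action is held constant on each delivery interval, the source is driven by $A_{i-1}$ on $[S_i,D_i)$ --- exactly $Y_i=D_i-S_i$ steps --- and by $A_i$ on $[D_i,t]$ --- exactly $t-D_i$ steps.

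Second, I would invoke the Markov property of the source together with the independence of the delay $Y_i$ from $\{X_t\}$ to write
\begin{equation}\label{eq:belief}
\Pr(X_t=x\mid\mathcal{I}_t)=\bigl(\delta_{X_{S_i}}\,\mathbf{P}_{A_{i-1}}^{\,Y_i}\,\mathbf{P}_{A_i}^{\,t-D_i}\bigr)(x),
\end{equation}
where $\delta_{X_{S_i}}$ is the point mass at the most recent exactly-observed state. The key point is that the earlier observations $(X_{S_j},S_j)$ with $j<i$ drop out: conditioning on the most recent exact observation $X_{S_i}$ screens off all earlier history, so the belief --- and hence every future cost $\mathcal{C}(X_k,a_k)$ with $k\ge t$ --- depends on $\mathcal{I}_t$ only through the belief $\delta_{X_{S_i}}\mathbf{P}_{A_{i-1}}^{\,Y_i}$ about $X_{D_i}$, which is fixed by $\mathcal{G}_i$. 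Taking the minimum over $a_{t:T}$ therefore yields a quantity depending on $\mathcal{I}_t$ only through $\mathcal{G}_i$, which is precisely the identity required by Definition \ref{definition1}.

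For the second claim, under condition (\ref{eq4}) the channel is busy throughout $[S_i,D_i)$, so no sample can be taken there and the earliest admissible next sampling instant is $D_i$. Consequently the sequence of binary decisions $\{a_\tau^S\}_{\tau\ge D_i}$ up to and including the next sample is in bijection with the waiting time $Z_i=S_{i+1}-D_i\ge0$, equivalently with $S_{i+1}=D_i+Z_i$. Optimizing over $\{a_\tau^S\}$ thus coincides with optimizing over $Z_i$ (or $S_{i+1}$), establishing the equivalence.

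The main obstacle I anticipate is the rigorous justification of \eqref{eq:belief}, specifically that conditioning on $X_{S_i}$ renders $X_t$ independent of the earlier history even though the sampling instants $S_j$ are themselves chosen adaptively from $\mathcal{I}_t$. I would address this by verifying that the adaptive sampling rule and the exogenous delays inject no information about $X_t$ beyond what $X_{S_i}$ and the intervening (history-measurable) actions $A_{i-1},A_i$ already carry; this is exactly where the exactness of the observations at the sampling times and the source-independence of $\{Y_i\}$ are essential.
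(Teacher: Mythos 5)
Your proposal is correct and follows essentially the same route as the paper's Appendix~\ref{appendixa}: both arguments use the Markov property plus the piecewise-constant control to reduce $\Pr(X_k\mid\mathcal{I}_t)$ to a function of $X_{S_i}$, $Y_i$, $A_{i-1}$ and the current action (your explicit product $\mathbf{P}_{A_{i-1}}^{Y_i}\mathbf{P}_{A_i}^{t-D_i}$ is the same two-segment decomposition the paper spells out when computing $q$ in Appendix~\ref{appendixd}), and both handle the second claim via the bijection between $\{a_\tau^S\}$ and $Z_i$ forced by condition (\ref{eq4}). The adaptivity concern you raise at the end is real but is not treated any more rigorously in the paper's own proof.
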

		\noindent \textit{Proof.} 
		See Appendix \ref{appendixa} for the proof. $\blacksquare$\vspace{2mm}
		
		With Lemma \ref{l1}, determining the optimal policy $\pi_t:\mathcal{I}_t\rightarrow\mathrm{Prob}(\mathcal{A}\times\{0,1\})$ for $\mathcal{P}1$ becomes equivalent to solving for an alternative policy $\phi_t: \mathcal{S}\times\mathcal{Y}\times\mathcal{A}\rightarrow\mathrm{Prob}(\mathbb{N}\times\mathcal{A})$, which is a mapping from $\mathcal{G}_i$ to a distribution of $(Z_i,A_i)$. The problem $\mathcal{P}1$ is thus rewritten as
		\begin{equation}
			\mathcal{P}2: \inf _{\phi_{0: \infty}} \limsup _{T \rightarrow \infty} \frac{1}{T} \mathbb{E}\left[\sum_{t=1}^T \mathcal{C}(X_t, a_t)\right].
		\end{equation}
		
		\subsection{Simplification of Problem $\mathcal{P}2$}
		
		As $G_i$ is a \textit{regenerative process} and $\lim\limits_{i\rightarrow \infty}D_i=\infty$, we can rewrite the objective function in problem $\mathcal{P}2$ as 
		\vspace{-1mm}
		\begin{align}
			\limsup _{T \rightarrow \infty} \frac{1}{T} \mathbb{E}\left[\sum_{t=1}^T \mathcal{C}(X_t, a_t)\right]\nonumber
		\end{align}
		\vspace*{0.03mm}
		\begin{align}
			&= \lim _{\it{D}_{\mathrm{n}} \rightarrow \infty} \frac{\mathbb{E}\left[\sum_{t=1}^{\it{D}_{\mathrm{n}}} \mathcal{C}(X_t, a(t))\right]}{\mathbb{E}\left[\it{D}_{\mathrm{n}}\right]}\nonumber\\
			&= \lim _{\mathrm{n} \rightarrow \infty} \frac{\sum_{i=0}^{n-1} \mathbb{E}\left[\sum_{t=\it{D}_{i}}^{\it{D}_{i+1}-1} \mathcal{C}(X_t, A_i)\right]}{\sum_{i=0}^{n-1} \mathbb{E}\left[\it{D}_{i+1}-\it{D}_{i}\right]}\nonumber\\
			&=\lim _{\mathrm{n} \rightarrow \infty} \frac{\sum_{i=0}^{n-1} \mathbb{E}\left[\sum_{t=\it{D}_{i}}^{\it{D}_{i+1}-1} \mathcal{C}(X_t, A_i)\right]}{\sum_{i=0}^{n-1} \mathbb{E}\left[Y_{i+1}+Z_i\right]}.
		\end{align}
		Then Problem $\mathcal{P}2$ can be rewritten as
		\begin{equation}
			\mathcal{P}3:h^* \triangleq \inf _{\phi_{0: \infty}}\lim _{\mathrm{n} \rightarrow \infty} \frac{\sum_{i=0}^{n-1} \mathbb{E}\left[\sum_{t=\it{D}_{i}}^{\it{D}_{i+1}-1} \mathcal{C}(X_t, A_i)\right]}{\sum_{i=0}^{n-1} \mathbb{E}\left[Y_{i+1}+Z_i\right]}.
		\end{equation}
		
		To solve Problem $\mathcal{P}3$, we consider the following problem with parameter $\lambda\ge0$: 
		\begin{equation}\resizebox{1\hsize}{!}{$
				\begin{aligned}
					&\mathcal{P}4: U(\lambda)\triangleq\\
					&\inf _{\phi_{0: \infty}} \lim _{\mathrm{n} \rightarrow \infty}\frac{1}{n} {\sum_{i=0}^{n-1}\left\{ \mathbb{E}\left[\sum_{t=\it{D}_{i}}^{\it{D}_{i+1}-1} \mathcal{C}(X_t, A_i)\right]-\lambda\mathbb{E} \left[Z_i+Y_{i+1}\right]\right\}},
				\end{aligned}$}
		\end{equation}
		
		By similarly applying Dinkelbach’s method as in \cite{dinkelbach1967nonlinear} and \cite[Lemma 2]{sun2019samplingwiener}, we obtain the following lemma:
		\begin{lemma}\label{l3}
			The following assertions hold:\\
			(i). $h^* \gtreqless \lambda \text { if and only if } U(\lambda) \gtreqless 0 \text {. }$\\
			(ii). When $U(\lambda)=0$, the solutions to Problem $\mathcal{P}4$ coincide with those of Problem $\mathcal{P}3$.\\
			(iii). $U(\lambda)=0$ has a unique root, and the root is $h^*$.
		\end{lemma}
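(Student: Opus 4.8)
The plan is to follow Dinkelbach's fractional-programming argument, exactly as suggested by the references \cite{dinkelbach1967nonlinear} and \cite[Lemma 2]{sun2019samplingwiener}. The starting point is to record, for any feasible policy $\phi$, the two per-cycle long-run averages $\bar C(\phi)\triangleq\lim_{n}\frac{1}{n}\sum_{i=0}^{n-1}\mathbb{E}[\sum_{t=D_i}^{D_{i+1}-1}\mathcal{C}(X_t,A_i)]$ and $\bar T(\phi)\triangleq\lim_{n}\frac{1}{n}\sum_{i=0}^{n-1}\mathbb{E}[Z_i+Y_{i+1}]$, whose existence is the same regenerative-process fact already invoked in passing from $\mathcal{P}2$ to $\mathcal{P}3$. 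Two structural properties are essential: since each cycle length satisfies $Z_i+Y_{i+1}\ge Y_{i+1}\ge 1$ we have $\bar T(\phi)\ge\bar T_{\min}>0$, and since both $Y_i$ and $Z_i$ are bounded ($\max Y_i<\infty$, $Z_i\in\mathcal{Z}\subseteq\mathbb{N}$) we also have $\bar T(\phi)\le\bar T_{\max}<\infty$ uniformly in $\phi$. With these, the objective of $\mathcal{P}3$ for a fixed $\phi$ is the ratio $r(\phi)=\bar C(\phi)/\bar T(\phi)$, while the objective of $\mathcal{P}4$ for the same $\phi$ is $\bar C(\phi)-\lambda\bar T(\phi)=\bar T(\phi)\,(r(\phi)-\lambda)$; because $\bar T(\phi)>0$, the sign of the $\mathcal{P}4$ objective coincides with the sign of $r(\phi)-\lambda$.

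For part (i) I would first prove the two one-sided equivalences $h^*\ge\lambda\iff U(\lambda)\ge0$ and $h^*\le\lambda\iff U(\lambda)\le0$, and then obtain the equality and the two strict relations by combining them. Consider the first. If $h^*\ge\lambda$ then $r(\phi)\ge h^*\ge\lambda$ for every $\phi$, so every term $\bar T(\phi)(r(\phi)-\lambda)\ge0$ and hence $U(\lambda)=\inf_\phi\bar T(\phi)(r(\phi)-\lambda)\ge0$. Conversely, if $U(\lambda)\ge0$ then, the infimum being a lower bound, $\bar T(\phi)(r(\phi)-\lambda)\ge U(\lambda)\ge0$ for every $\phi$; dividing by $\bar T(\phi)>0$ gives $r(\phi)\ge\lambda$, so $h^*=\inf_\phi r(\phi)\ge\lambda$. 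The $\le$ equivalence is symmetric, except that its forward direction must \emph{approach} the infimum: taking a minimizing sequence $\phi_k$ with $r(\phi_k)\to h^*\le\lambda$ and invoking the uniform bounds $0<\bar T_{\min}\le\bar T(\phi_k)\le\bar T_{\max}<\infty$ forces $\bar T(\phi_k)(r(\phi_k)-\lambda)$ to have nonpositive limit superior, whence $U(\lambda)\le0$.

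Part (ii) then follows at once: when $U(\lambda)=0$, the combined equivalence of (i) gives $h^*=\lambda$, so a policy attains the $\mathcal{P}3$ optimum iff $r(\phi^*)=\lambda$, iff $r(\phi^*)-\lambda=0$, iff $\bar T(\phi^*)(r(\phi^*)-\lambda)=0=U(\lambda)$, i.e.\ iff $\phi^*$ attains the $\mathcal{P}4$ optimum; hence the two sets of optimal solutions coincide. Part (iii) is essentially a corollary of (i): the equivalence $U(\lambda)=0\iff h^*=\lambda$ already states that the root set of $U$ is exactly $\{h^*\}$, giving both existence and uniqueness of the root together with its value. If a self-contained justification of uniqueness is preferred, I would instead note that $U(\lambda)=\inf_\phi[\bar C(\phi)-\lambda\bar T(\phi)]$ is an infimum of affine functions of $\lambda$ with slopes $-\bar T(\phi)\le-\bar T_{\min}<0$, so $U$ is concave and strictly decreasing, which admits at most one root.

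I expect the main obstacle to be the forward direction of the $\le$ equivalence in (i)—the step that must approach the infimum through a minimizing sequence rather than argue termwise—since this is precisely where the boundedness assumptions enter, guaranteeing $0<\bar T_{\min}\le\bar T(\phi)\le\bar T_{\max}<\infty$ uniformly and allowing the limit to be passed through $\bar T(\phi_k)(r(\phi_k)-\lambda)$. A secondary point worth verifying is the existence of the per-cycle averages $\bar C(\phi)$ and $\bar T(\phi)$ for the admissible (possibly non-stationary) policy class, which I would handle by the same regenerative/renewal-reward reasoning already used to obtain $\mathcal{P}3$.
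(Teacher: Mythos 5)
Your proposal is correct and follows essentially the same route as the paper's Appendix~\ref{appendixb}: the Dinkelbach sign-equivalence argument hinging on the fact that the mean cycle length $\lim_n\frac{1}{n}\sum_{i=0}^{n-1}\mathbb{E}[Z_i+Y_{i+1}]$ is uniformly bounded away from $0$ and $\infty$, followed by the monotonicity of $U(\lambda)$ for uniqueness of the root. If anything, your treatment is slightly more careful than the paper's, since you explicitly route the non-strict directions through minimizing sequences rather than asserting that the infima $h^*$ and $U(\lambda)$ are attained by some policy.
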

		\noindent \textit{Proof.} 
		See Appendix \ref{appendixb} for the proof. $\blacksquare$\vspace{2mm}
		
		Following Lemma \ref{l3}, solving Problem $\mathcal{P}3$ can be simplified to solving Problem $\mathcal{P}4$ under $U(\lambda)=0$. Our remaining goal is to solve $\mathcal{P}4$ and search $h^*$ such that $U(h^*)=0$.
		\subsection{Reformulate Problem $\mathcal{P}4$ as a Standard MDP}\label{MDP}
		In this subsection, we formulate Problem $\mathcal{P}4$ as a standard infinite horizon MDP problem. We introduce the state space, action space, transition probability, and cost function of the MDP problem in this subsection specifically. This MDP problem with parameter $\lambda$ is denoted as $\mathscr{P}_{\mathrm{MDP}}(\lambda)$:
		\begin{itemize}
			\item \textbf{State Space}: the state of the equivalent MDP is the sufficient statistics $\mathcal{G}_i=(X_{S_i},Y_i,A_{i-1})\in\mathcal{S}\times\mathcal{Y}\times\mathcal{A}$.
			\item \textbf{Action Space}: the actions space of the MDP composed by the tuple $(Z_i,A_i)\in\mathcal{\mathcal{Z}\times\mathcal{A}}$, where $Z_i$ is the sampling waiting time and $A_i$ is the controlled actions. 
			\item \textbf{Transition Probability}: The transition probability is defined by $\Pr(\mathcal{G}_{i+1}|\mathcal{G}_i,Z_i,A_i)$. We have the transition probability as ({see Appendix \ref{appendixc} for the proof}):
			\begin{equation}\label{tran}
				\begin{aligned}
					&\Pr(\mathcal{G}_i=(s',\delta',a')|\mathcal{G}_i=(s,\delta,a),Z_i,A_i)\\
					&=\Pr(Y_{i+1}=\delta')\cdot[\mathbf{P}_a^{\delta}\cdot\mathbf{P}_{a_i}^{Z_i}]_{s\times s'}\cdot\mathbbm{1}\{a'=A_i\},
				\end{aligned}
			\end{equation}
			
			\item \textbf{Cost Function}: the cost function is typically a real-valued function over the state space and the action space. We denote the cost function as $g(\mathcal{G}_i,Z_i,A_i)$, and next show that we can tailor the cost function to establish an identical standard MDP of Problem $\mathcal{P}4$. 
			\begin{lemma}\label{l4}
				If the cost function is defined by 
				\begin{equation}\label{cfun}
					\begin{aligned}
						g(\mathcal{G}_i,Z_i,A_i;\lambda)\triangleq q(\mathcal{G}_i,Z_i,A_i)-\lambda f(Z_i),
					\end{aligned}
				\end{equation}
				\begin{align}
					\text{where          }~~~~f(Z_i)=Z_i+\mathbb{E}\left[Y_{i+1}\right],&
				\end{align}
				\begin{equation}\label{15}\resizebox{0.92\hsize}{!}{$
						\begin{aligned}
							&q(\mathcal{G}_i,Z_i,A_i)=q(X_{S_i},Y_i,A_{i-1},Z_i,A_i)\\
							&=\mathbb{E}\left[\sum_{s'\in\mathcal{S}}\left[\sum_{t=0}^{Z_i+\mathrm{Y}_{i+1}-1} \mathbf{P}_{A_{i-1}}^{Y_i}\cdot\mathbf{P}_{A_{i}}^{t}\right]_{X_{S_i}\times s'}\cdot\mathcal{C}(s',A_i)\right],
						\end{aligned}$}
				\end{equation}
				then Problem $\mathscr{P}_{\mathrm{MDP}}(\lambda)$ is identical to Problem $\mathcal{P}4$.
			\end{lemma}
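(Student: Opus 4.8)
The plan is to show that, under the stated cost function, the expected cost accumulated by $\mathscr{P}_{\mathrm{MDP}}(\lambda)$ over a single decision stage coincides with the per-interval contribution appearing in the objective of Problem $\mathcal{P}4$. Since a policy $\phi$ for $\mathcal{P}4$ maps the sufficient statistic $\mathcal{G}_i$ to a distribution over $(Z_i,A_i)$ — which is precisely an admissible policy for the average-cost MDP $\mathscr{P}_{\mathrm{MDP}}(\lambda)$ — and since the state evolves according to (\ref{tran}), a stage-by-stage match of expected costs suffices to conclude that the two objectives, and hence the two problems, are identical. I would first recall that the immediate MDP cost $g(\mathcal{G}_i,Z_i,A_i;\lambda)$ is, by convention, the conditional expectation of the incurred cost over the transition randomness $Y_{i+1}$, so the target identity is $g(\mathcal{G}_i,Z_i,A_i;\lambda)=\mathbb{E}[\sum_{t=D_i}^{D_{i+1}-1}\mathcal{C}(X_t,A_i)-\lambda(Z_i+Y_{i+1})\mid\mathcal{G}_i,Z_i,A_i]$.

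I would then split $g=q-\lambda f$ and treat the two pieces separately. The $f$-term is immediate: from the timing relations $S_{i+1}=D_i+Z_i$ and $D_{i+1}=S_{i+1}+Y_{i+1}$ implied by (\ref{eq4}), one gets $D_{i+1}-D_i=Z_i+Y_{i+1}$, so $\mathbb{E}[f(Z_i)]=\mathbb{E}[Z_i]+\mathbb{E}[Y_{i+1}]=\mathbb{E}[Z_i+Y_{i+1}]$, matching the $\lambda$-weighted term of $\mathcal{P}4$ exactly.

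The substantive step is verifying $q(\mathcal{G}_i,Z_i,A_i)=\mathbb{E}[\sum_{t=D_i}^{D_{i+1}-1}\mathcal{C}(X_t,A_i)\mid\mathcal{G}_i,Z_i,A_i]$, which I would do by reconstructing the conditional law of the source over the holding interval from the sufficient statistic. On $[S_i,D_i)$ the action is held at $A_{i-1}$ for exactly $Y_i$ slots, so the law of $X_{D_i}$ given $X_{S_i}$ is $\mathbf{P}_{A_{i-1}}^{Y_i}$ applied to the row indexed by $X_{S_i}$; on $[D_i,D_{i+1})$ the action is $A_i$, so the law of $X_{D_i+t}$ given $X_{D_i}$ is $\mathbf{P}_{A_i}^{t}$. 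Composing the two factors, and using that $Y_{i+1}$ (hence the interval length $Z_i+Y_{i+1}$) is independent of the source $X_t$, I would take the expectation over $Y_{i+1}$ and sum over $t=0,\ldots,Z_i+Y_{i+1}-1$, recognising the result as exactly the expression (\ref{15}) defining $q$.

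The main obstacle I anticipate is the bookkeeping needed to justify that $\mathcal{G}_i=(X_{S_i},Y_i,A_{i-1})$ carries all the information required to pin down the distribution of $\{X_t\}_{t\in[D_i,D_{i+1})}$ — in particular, that the pre-delivery evolution over $[S_i,D_i)$ under $A_{i-1}$ must be accounted for through the factor $\mathbf{P}_{A_{i-1}}^{Y_i}$ before propagating under $A_i$, and that the Markov property allows the per-stage cost to depend on the history only through $\mathcal{G}_i$. Once this conditional trajectory law is in place, taking the outer expectation yields $\mathbb{E}[q]=\mathbb{E}[\sum_{t=D_i}^{D_{i+1}-1}\mathcal{C}(X_t,A_i)]$; combined with the $f$-term, each summand of $\sum_{i=0}^{n-1}\mathbb{E}[g(\mathcal{G}_i,Z_i,A_i;\lambda)]$ coincides with the corresponding summand of $\mathcal{P}4$, so the normalised limits and the infima over $\phi$ agree, establishing that the two problems are identical.
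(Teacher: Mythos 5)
Your proposal is correct and follows essentially the same route as the paper's proof: the $\lambda f(Z_i)$ term is matched to $\mathbb{E}[Z_i+Y_{i+1}]$ via the timing identity $D_{i+1}-D_i=Z_i+Y_{i+1}$, and the $q$ term is verified by computing $\Pr(X_t=s'\mid X_{S_i},Y_i,A_{i-1})=[\mathbf{P}_{A_{i-1}}^{Y_i}\cdot\mathbf{P}_{A_i}^{t-D_i}]_{X_{S_i}\times s'}$ through the same two-phase decomposition (action $A_{i-1}$ over $[S_i,D_i]$, then $A_i$), followed by the index shift $t:=t-D_i$ and the interchange of sums that recovers (\ref{15}). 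No gaps; this matches Appendix D.
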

		\end{itemize}
		\noindent \textit{Proof.} 
		See Appendix \ref{appendixd} for the proof. $\blacksquare$\vspace{2mm}
		
		Our remaining focus is to solve $\mathscr{P}_{\mathrm{MDP}}(\lambda)$ and seek a value $h^*$ such that $U(h^*)=0$.

		\subsection{Existence of Optimal Stationary Deterministic Policy}
		We examine the sufficient conditions required for the existence of a \textit{stationary deterministic} policy within $\mathscr{P}_{\mathrm{MDP}}(\lambda)$. Our main result is described in the following theorem:
		

		\begin{theorem}\label{the1}
			If an MDP characterized by finite state space $\mathcal{S}$, finite action space $\mathcal{A}$, and transition probability $\mathbf{P}_a$ is a unichain, then: ($i$) the transformed Age-aware remote MDP $\mathscr{P}_{\mathrm{MDP}}(\lambda)$ is also a unichain; ($ii$) an optimal stationary deterministic policy exists for $\mathscr{P}_{\mathrm{MDP}}(\lambda)$. 
		\end{theorem}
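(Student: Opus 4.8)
\noindent\textit{Proof proposal.}
The plan is to separate the two claims, deriving (ii) from (i) by classical average-cost theory, so that the real content lies in the unichain claim (i). First I would record the structural facts that make $\mathscr{P}_{\mathrm{MDP}}(\lambda)$ a genuine finite average-cost MDP: the state space $\mathcal{S}\times\mathcal{Y}\times\mathcal{A}$ is finite because $\mathcal{S},\mathcal{A}$ are finite and the delay is bounded ($\max_i Y_i<\infty$, $\mathcal{Y}\subseteq\mathbb{N}^+$); the action space $\mathcal{Z}\times\mathcal{A}$ is finite because the waiting time is bounded ($\mathcal{Z}\subseteq\mathbb{N}$ is bounded by the regeneration assumption); and the one-stage cost $g=q-\lambda f$ of Lemma~\ref{l4} is bounded, since $q$ is a finite sum of bounded terms $\mathcal{C}(\cdot,\cdot)<\infty$ over the bounded horizon $Z_i+Y_{i+1}$ and $f(Z_i)=Z_i+\mathbb{E}[Y_{i+1}]$ is bounded. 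Granting (i), claim (ii) is then immediate from the standard result that a finite-state, finite-action, unichain average-cost MDP with bounded costs has a constant optimal gain independent of the initial state (which also justifies treating $\mathcal{P}1$ as initial-state free) and admits an optimal stationary deterministic policy solving the average-cost optimality equation.

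For (i), fix an arbitrary stationary deterministic policy $\phi:\mathcal{S}\times\mathcal{Y}\times\mathcal{A}\to\mathcal{Z}\times\mathcal{A}$, write $(z,\alpha)=\phi(s,\delta,a)$, and study the chain induced on $\mathcal{S}\times\mathcal{Y}\times\mathcal{A}$ by the kernel (\ref{tran}). The simplification I would exploit is that the delay coordinate $\delta'=Y_{i+1}$ is drawn i.i.d.\ from $\Pr(Y=\cdot)$ and is independent of everything else, while the tag $a'=A_i=\alpha$ is a deterministic function of the current state. Hence, given $(s_i,a_i)$ the coordinate $\delta_i\sim Y$ is fresh and independent, so the projection $(s_i,a_i)_{i\ge1}$ is itself a Markov chain on $\mathcal{S}\times\mathcal{A}$---the \emph{block chain}---with kernel obtained by averaging (\ref{tran}) over $\delta\sim Y$. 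I would then show that the recurrent classes of the full chain are exactly the lifts $B\times\mathcal{Y}$ of the recurrent classes $B$ of the block chain: from any $(s,\delta,a)$ with $(s,a)\in B$ the specific-$\delta$ successors form a subset of the block successors and hence stay in $B$, while the freely refreshed target delay lets one navigate to any $(s',\delta',a')$ in the lift. Consequently the full chain is unichain if and only if the block chain is, and (i) reduces to showing the block chain has a single recurrent class.

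It remains to transfer the unichain property of the source MDP to the block chain, and this is where I expect the main obstacle. The block-chain transition moves the sampled source by the \emph{multi-step} operator $\mathbf{P}_a^{\delta}\mathbf{P}_{\alpha}^{z}$ (action $a=A_{i-1}$ for $\delta=Y_i$ slots, then $\alpha=A_i$ for $z=Z_i$ slots), rather than by a one-step stationary transition $\mathbf{P}_\mu$; since ``unichain'' is a statement about stationary deterministic policies, it cannot be invoked directly. I would attempt the transfer through the closed-set characterization: the block chain fails to be unichain iff $\mathcal{S}\times\mathcal{A}$ contains two disjoint closed sets, and I would push these back to two disjoint closed sets of the original MDP under a comparison (in general randomized) stationary policy built from the actions $\phi$ applies along recurrent sample paths, contradicting the hypothesis. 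The delicate points are that the sampled states $X_{S_i}$ skip the intermediate states visited during the $\delta$- and $z$-phases, and, more seriously, that taking matrix powers can split a \emph{periodic} recurrent class (for a period-$2$ source with deterministic delay and $z\equiv0$, $\mathbf{P}_a^{\delta}\mathbf{P}_{\alpha}^{z}=\mathbf{P}_a^{2}$ becomes reducible). I would therefore either strengthen the hypothesis to \emph{ergodic} unichain (a single aperiodic recurrent class) or exploit the randomness of $Y$ (a support of mixed parities breaks periodicity), and then argue that repeated application of the resulting mixing operators, together with the freely chosen target delay and the tag coordinate $\alpha$ steered through admissible actions, forces all recurrent block-states to communicate. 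Establishing this communication under the composite operators is the step I expect to require the most care.
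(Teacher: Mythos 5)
Your treatment of claim (ii) is sound and matches the paper's: finiteness of $\mathcal{S}\times\mathcal{Y}\times\mathcal{A}$ and of the action set, boundedness of $g=q-\lambda f$, and the standard average-cost results (the paper invokes Sennott for stationarity and Puterman's Theorem 8.4.5 for determinism under the unichain condition). Your reduction of claim (i) to a ``block chain'' on $\mathcal{S}\times\mathcal{A}$ by marginalizing the i.i.d.\ delay coordinate is also essentially the paper's move packaged in reverse: the paper's Lemma 7 shows that a product of independent unichains is a unichain and treats $Y_i$ as a trivially unichain i.i.d.\ component, and its argument that the tag $A_{i-1}$ forms a unichain plays the role of your deterministic tag coordinate.

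The genuine gap is that you never prove the one thing the theorem actually turns on: that the sampled source chain, driven by the composite operators $\mathbf{P}_a^{\delta}\mathbf{P}_{\alpha}^{z}$, retains a single recurrent class. You correctly isolate this as the crux and correctly observe the periodicity obstruction (for a $2$-periodic irreducible $\mathbf{P}$, the power $\mathbf{P}^2$ is reducible, so the unichain property is not preserved by taking powers in general), but you then only list escape routes --- strengthening the hypothesis to an aperiodic recurrent class, or exploiting mixed parities in the support of $Y$ --- without carrying any of them out, and neither extra assumption appears in the theorem statement. The paper closes this step with its Lemma 6 (powers and products of unichain kernels are unichain) and Corollary 1, arguing that for recurrent $i,j$ one has $p_{ij}^{N_{ij}+kN_{ii}}>0$ for all $k$ and then selecting a multiple of $n$ from this arithmetic progression. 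It is worth noting that the counterexample you raise in fact bites against that argument as well: for the $2$-cycle, $N_{ii}$ is necessarily even and $N_{ij}$ odd, so no multiple of an even $n$ lies in the progression, and the quantity $\lim_{n\rightarrow\infty}p_{ij}^n$ used in the paper's definition of communication does not exist for periodic chains. So your proposal is incomplete as a proof, but the difficulty you flag is real and is not fully dispatched by the paper's Lemma 6 either; a complete argument needs either an aperiodicity hypothesis on the source or a demonstration that the randomness of $Y_{i+1}$ (support not contained in a single residue class of the period) restores communication among the sampled states.
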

		\noindent \textit{Proof.} 
		See Appendix \ref{appendixe} for the proof. $\blacksquare$ \vspace{2mm}
		
		\subsection{Numerical Solutions}
		We propose two algorithms to solve the infinite-horizon MDP and seek the parameter $h^*$ such that $U(h^*)=0$. 
		
		$\bullet$ \textit{Bisec-MRVI}: The first one is a \textit{two-layer} algorithm. The outer layer is based on the \textit{bisection-search} method: the search interval $(\lambda_{\downarrow}^{(k)},\lambda_{\uparrow}^{(k)})$ is iteratively narrowed down by half until the interval can closely approximate the value $h^*$ such that $U(h^*)=0$. The internal layer utilizes a \textit{modified Relative Value Iteration} (MRVI) \cite[Eq. 4.72]{bertsekas2012dynamic2} to compute the value $U(\lambda)$ by resolving the MDP $\mathscr{P}_{\mathrm{MDP}}(\lambda)$ \footnote{Different from a standard RVI algorithm, the MRVI algorithm ensures that the algorithm converges under a weaker condition that the Markov chain under a given stationary policy is \textit{periodic} \cite[Proposition 4.3.4]{bertsekas2012dynamic2}.}. A similar \textit{two-layer bisection-based} method has been introduced in \cite{DBLP:journals/tit/SunUYKS17} and \cite{sun2019samplingwiener} to achieve Age-optimal or Mean Square Error (MSE)-optimal sampling. We note the complexity of \textit{Bisec-RVI} algorithm is dependent on the initialization of the search interval $(\lambda_{\downarrow},\lambda_{\uparrow})$, and thus establish an upper and lower bound of $h^*$ here:
		\begin{lemma}\label{l5}
			The lower bound of $h^*$ is given by \begin{equation}h^*\ge\min_{s,a}\mathcal{C}(s,a).\end{equation}
			The upper bound of $h^*$ is given by
			\begin{equation}
				h^*\le\min_a\sum_{s\in\mathcal{S}}{\pi}_a(s)\cdot \mathcal{C}(s,a),
			\end{equation}
			where ${\pi}_a(s)$ represents the stationary distribution of state $s$, corresponding to the transition probability matrix $\mathbf{P}_a$.
		\end{lemma}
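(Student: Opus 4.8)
The plan is to establish the two bounds separately: the lower bound follows from a pointwise domination argument, while the upper bound follows from evaluating a specific feasible policy and invoking ergodicity.

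For the \emph{lower bound}, note that for any admissible policy $\phi_{0:\infty}$ and any horizon $T$, the per-slot cost obeys $\mathcal{C}(X_t,a_t)\ge\min_{s,a}\mathcal{C}(s,a)$ pointwise, since the minimization ranges over the entire finite state-action space. Hence
\begin{equation}
\frac{1}{T}\mathbb{E}\left[\sum_{t=1}^T\mathcal{C}(X_t,a_t)\right]\ge\min_{s,a}\mathcal{C}(s,a),
\end{equation}
and taking $\limsup_{T\to\infty}$ followed by the infimum over all policies preserves the inequality, so that $h^*\ge\min_{s,a}\mathcal{C}(s,a)$.

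For the \emph{upper bound}, I would exhibit, for each fixed $a\in\mathcal{A}$, a feasible policy whose long-term average cost equals $\sum_{s\in\mathcal{S}}\pi_a(s)\mathcal{C}(s,a)$. Consider the stationary policy that applies the constant control $a_t=a$ at every slot, paired with any sampling rule satisfying conditions $i)$ and $ii)$ (e.g.\ the zero-wait rule $Z_i=0$). Because the control is constant and never depends on the delayed observations, the controlled source evolves as a time-homogeneous Markov chain governed solely by $\mathbf{P}_a$; the sampling and observation process is then irrelevant to the realized cost. By the unichain hypothesis of Theorem \ref{the1}, this chain is unichain and admits the unique stationary distribution $\pi_a$, so $\pi_a$ is well defined.

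The crux is a convergence step: for a finite-state unichain Markov chain, the Ces\`aro average of the $t$-step transition probabilities converges to the stationary distribution independently of the initial state, i.e.\ $\lim_{T\to\infty}\frac{1}{T}\sum_{t=1}^T[\mathbf{P}_a^t]_{s_0\times s}=\pi_a(s)$ for every $s_0$. Because $\mathcal{C}(\cdot,a)$ is bounded, this yields
\begin{equation}
\lim_{T\to\infty}\frac{1}{T}\mathbb{E}\left[\sum_{t=1}^T\mathcal{C}(X_t,a)\right]=\sum_{s\in\mathcal{S}}\pi_a(s)\mathcal{C}(s,a).
\end{equation}
As this policy is feasible, the infimum defining $h^*$ cannot exceed this value; minimizing over $a\in\mathcal{A}$ then gives $h^*\le\min_a\sum_{s\in\mathcal{S}}\pi_a(s)\mathcal{C}(s,a)$. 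The step I expect to require the most care is this convergence claim: one must invoke the Ces\`aro (rather than ordinary) limit to accommodate possibly periodic chains---mirroring the reliance on the modified relative value iteration noted in the text---and verify that the constant-control policy genuinely satisfies the regenerative sampling conditions so that it belongs to the feasible set over which $h^*$ is defined.
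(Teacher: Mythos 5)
Your proof is correct and follows essentially the same route as the paper: the lower bound by pointwise domination of the cost, and the upper bound by restricting to the subclass of constant-control policies $a_t \equiv a$, under which the source reduces to a Markov chain governed by $\mathbf{P}_a$ with long-run average cost $\sum_{s}\pi_a(s)\mathcal{C}(s,a)$. Your added care about the Ces\`aro limit for possibly periodic unichains and about verifying that the constant-control, zero-wait policy lies in the feasible set is a welcome refinement that the paper's own proof glosses over.
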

		\noindent \textit{Proof.} 
		See Appendix \ref{appendixf} for the proof. $\blacksquare$ \vspace{2mm}
		
		The \textit{Bisec-MRVI} algorithm is in Algorithm \ref{Algorithm 1}, where the details of the \textit{MRVI} is demonstrated in \cite[Eq. 4.72]{bertsekas2012dynamic2}.
		\vspace{-2mm}
		\begin{algorithm}
			\caption{\textit{Bisec-MRVI} algorithm}
			\label{Algorithm 1}
			\LinesNumbered
			\KwIn{Tolerence $\epsilon>0$, MDP $\mathscr{P}_{\mathrm{MDP}}(\lambda)$}
			Initialization: $\lambda_{\uparrow}=\min_a\sum_{s\in\mathcal{S}}{\pi}_a(s)\cdot \mathcal{C}(s,a)$, $\lambda_{\downarrow}=\min_{s,a}\mathcal{C}(s,a)$ \;
			\While {$\lambda_{\uparrow}-\lambda_{\downarrow}\ge \epsilon$}
			{
				$\lambda=(\lambda_{\uparrow}+\lambda_{\downarrow})/2$\;
				Run \textit{MRVI} to solve $\mathscr{P}_{\mathrm{MDP}}(\lambda)$ and calculate $U(\lambda)$\;
				\If{$U(\lambda)>0$}
				{
					$\lambda_{\downarrow}=\lambda$\;
				}
				\Else
				{$\lambda_{\uparrow}=\lambda$\;}			
			}
			\KwOut{$h^*=\lambda$}
		\end{algorithm}
		\vspace{-3mm}
		
		$\bullet$ \textit{Fixed-Point-Based Iteration (FPBI)}: The \textit{two-layer} \textit{Bisec-MRVI} algorithm requires repeatedly executing the MRVI algorithm in the inner layer, which is \textit{computation-intensive}. This motivates us to propose a \textit{one-layer} algorithm, termed FPBI algorithm, to avoid the computation overhead. Unlike \textit{Bisec-MRVI}, The \textit{one-layer} FPBI algorithm treats $U(h^*)=0$ as a constraint within the Markov Decision Process $\mathscr{P}_{\mathrm{MDP}}(h^*)$. Our main result is the following equivalent equations.	
		\begin{theorem}\label{the2}
			Solving the root finding problem of $U(\lambda)$ is equivalent to solving the following nonlinear equations:
			\begin{equation}\label{eqfunction}
				\left\{
				\begin{aligned}
					&W^*(\gamma)=\min _{A_i, Z_i}\big\{g(\gamma,A_i,Z_i;h^*)+\mathbb{E}[W^*\left(\gamma'\right)|\gamma,Z_i,A_i]\big\}\\
					&\text{for } \gamma\in\mathcal{S}\times\mathcal{Y}\times\mathcal{A},\\
					&h^*=\min _{A_i, Z_i}\left\{\frac{q\left(\gamma^{\mathrm{ref}}, A_i, Z_i\right)+
						\mathbb{E}[W^*\left(\gamma'\right)|\gamma^{\mathrm{ref}}, A_i, Z_i]}{f\left(Z_i\right)}\right\},\\
				\end{aligned}\right.
			\end{equation}
			where $\gamma^{\mathrm{ref}}\in\mathcal{S}\times\mathcal{Y}\times\mathcal{A}$ can be arbitrarily chosen.
		\end{theorem}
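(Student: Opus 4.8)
The plan is to recognize the first equation as the average-cost optimality equation (ACOE) of the standard MDP $\mathscr{P}_{\mathrm{MDP}}(h^*)$ with its optimal average cost forced to zero, and the second equation as a reference-state normalization rewritten in ratio form. By Lemma \ref{l4}, the optimal per-epoch average cost of $\mathscr{P}_{\mathrm{MDP}}(\lambda)$ equals $U(\lambda)$, and by Theorem \ref{the1} this MDP is a unichain admitting a stationary deterministic optimal policy. Standard average-cost MDP theory then guarantees a relative value function $W_\lambda$ and a scalar $U(\lambda)$ obeying $U(\lambda)+W_\lambda(\gamma)=\min_{A_i,Z_i}\{g(\gamma,A_i,Z_i;\lambda)+\mathbb{E}[W_\lambda(\gamma')|\gamma,Z_i,A_i]\}$ for all $\gamma\in\mathcal{S}\times\mathcal{Y}\times\mathcal{A}$. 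This single relation is the backbone of both directions of the equivalence.

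For the forward direction I would set $\lambda=h^*$. Lemma \ref{l3}(iii) gives $U(h^*)=0$, so the ACOE collapses to the first displayed equation with $W^*:=W_{h^*}$. Because $W^*$ is pinned down only up to an additive constant, I would exploit this freedom to normalize $W^*(\gamma^{\mathrm{ref}})=0$ at the chosen reference state. Evaluating the first equation at $\gamma^{\mathrm{ref}}$ and substituting $g=q-h^* f$ then yields $\min_{A_i,Z_i}\{q(\gamma^{\mathrm{ref}},A_i,Z_i)-h^* f(Z_i)+\mathbb{E}[W^*(\gamma')|\gamma^{\mathrm{ref}},Z_i,A_i]\}=0$.

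The crux is then an elementary equivalence: for any family indexed by $(A_i,Z_i)$ with strictly positive $f(Z_i)$, the identity $\min\{c-h^* f\}=0$ holds if and only if $h^*=\min\{c/f\}$, which follows by reading the first condition as ``$c\ge h^* f$ for every action, with equality for some action'' and dividing by $f>0$. The required positivity is immediate since $f(Z_i)=Z_i+\mathbb{E}[Y_{i+1}]\ge\mathbb{E}[Y_{i+1}]\ge 1>0$ because $Z_i\ge 0$ and $Y_{i+1}\in\mathbb{N}^+$. Applying this with $c=q(\gamma^{\mathrm{ref}},A_i,Z_i)+\mathbb{E}[W^*(\gamma')|\gamma^{\mathrm{ref}},Z_i,A_i]$ converts the preceding identity into the second displayed equation, completing the forward direction.

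Conversely, given any pair $(W^*,h^*)$ solving the system, I would note that the first equation is exactly an ACOE for $\mathscr{P}_{\mathrm{MDP}}(h^*)$ whose constant term is $0$; by the unichain property of Theorem \ref{the1} the optimal average cost is unique and equals that constant, so $U(h^*)=0$ and $h^*$ is the sought root. Independence from the particular $\gamma^{\mathrm{ref}}$ follows from Lemma \ref{l3}(iii): the root $h^*$ is unique, hence unaffected by which reference state fixes the constant in $W^*$. I expect the main obstacle to be the rigorous identification of the abstract ACOE scalar with the concrete quantity $U(\lambda)$ — that is, certifying that solvability of the first equation with constant $0$ forces the optimal average cost to vanish — which must rest carefully on the unichain structure (Theorem \ref{the1}) and the value identification (Lemma \ref{l4}); by comparison, the ratio-to-linear conversion is routine once $f>0$ is secured.
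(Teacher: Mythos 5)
Your proposal is correct and follows essentially the same route as the paper's proof: invoke the average-cost Bellman equation for $\mathscr{P}_{\mathrm{MDP}}(\lambda)$, set $\lambda=h^*$ with $U(h^*)=0$ from Lemma \ref{l3}, normalize the relative value function so that $W^*(\gamma^{\mathrm{ref}})=0$, and use $f(Z_i)>0$ to pass between the linear form $\min\{q-h^*f+\mathbb{E}[W^*]\}=0$ and the ratio form defining $h^*$. The only difference is that you also spell out the converse direction (a solution of the system forces $U(h^*)=0$ via the unichain uniqueness of the average cost), which the paper leaves implicit.
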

		\noindent \textit{Proof.} 
		See Appendix \ref{appendixg} for the proof. $\blacksquare$\vspace{2mm}

		In (\ref{eqfunction}), there are $|\mathcal{S}|\times|\mathcal{Y}|\times|\mathcal{A}|+1$ variables and an equal number of non-linear equations. In what follows we reveal (\ref{eqfunction}) forms a set of \textit{fixed-point equations}, which can be effectively solved through \textit{fixed-point iterations} \cite{hildebrand1987introduction}. 
		
		Let $\mathbf{W}^*$ denote the vector consisting of $W^*(\gamma)$ for all $\gamma\in\mathcal{S}\times\mathcal{Y}\times\mathcal{A}$, (\ref{eqfunction}) can be succinctly represented as:
		\begin{equation}\label{eqfunctiontransformation}
			\left\{
			\begin{aligned}
				&\mathbf{W}^*=T(\mathbf{W}^*,h^*)\\
				&h^*=H(\mathbf{W}^*).
			\end{aligned}\right.
		\end{equation}
		Then, substituting the second equation $h^*=H(\mathbf{W}^*)$ into the first equation of (\ref{eqfunctiontransformation}) yields $\mathbf{W}^*=T(\mathbf{W}^*,H(\mathbf{W}^*))$. Define $Q(\mathbf{W})\triangleq T(\mathbf{W},H(\mathbf{W}))$ for simplification. Consequently, as implied by (\ref{eqfunctiontransformation}), we have the \textit{fixed-point equation}: \begin{equation}
			\mathbf{W}^*=Q(\mathbf{W}^*),
		\end{equation}
		which can be solved by \textit{fixed-point iteration} in Algorithm \ref{Algorithm 2}.
		\vspace{-1mm}
		\begin{figure}[tbp]
			\centering
			\includegraphics[width=1\linewidth]{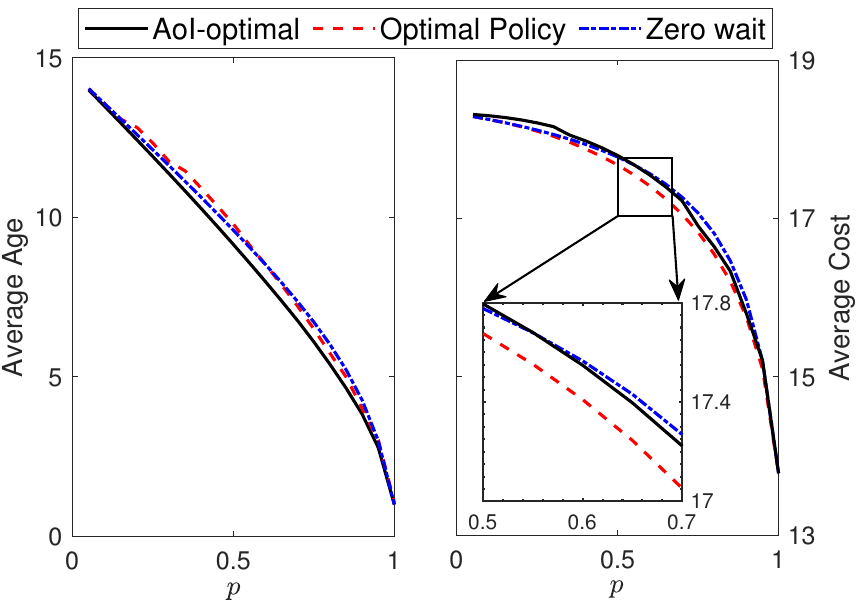}
			\caption{Average age and average cost vs. $p$ with \textit{i.i.d} random delay $Y_i$, where $\Pr(Y_i=1)=p$ and $\Pr(Y_i=10)=1-p$.}
			\label{fig:combinetwofigure}
			\vspace{-2mm}
		\end{figure}
		\begin{figure}[t]
			\centering
			\includegraphics[width=1\linewidth]{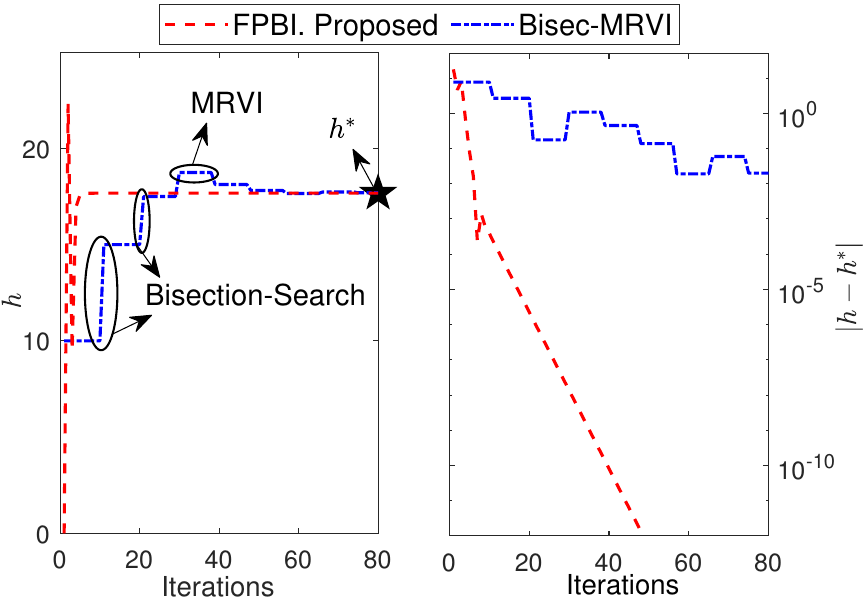}
			\caption{\textit{Bisec-MRVI} vs. \textit{FPBI}, where $p=0.5$.}
			\label{fig:converge}
			\vspace{-4mm}
		\end{figure}
		\vspace{-4mm}
		\begin{algorithm}
			\caption{FPBI for Solving (\ref{eqfunction})}
			\label{Algorithm 2}
			\LinesNumbered
			\KwIn{MDP $\mathscr{P}_{\mathrm{MDP}}(\lambda)$, Tolerence $\epsilon$;}
			Initialization: $\mathbf{W}^*_{(0)}=\mathbf{0}$, $\mathbf{W}^*_{(-1)}=\infty$, $k=0$ \;
			Choose $\gamma^{ref}$ arbitrarily\;
			\While {$||\tilde{V}_{\pi_A}^k(\mathbf{w})-\tilde{V}_{\pi_A}^{k-1}(\mathbf{w})||\ge \epsilon$}
			{
				$k=k+1$\;
				$h^*_{(k)}=H(\mathbf{W}^*_{(k-1)})$\;
				$\mathbf{W}^*_{(k)}=T(\mathbf{W}^*_{(k-1)},h^*_{(k)})$\;\tcp{See (\ref{eqfunction}) and (\ref{eqfunctiontransformation}) for $T(\cdot)$ and $H(\cdot)$.}	
			}
			\KwOut{$h^*=h^*_{(k)}$}
		\end{algorithm}
		\vspace{-1.5em}
		\section{Simulation Results}\label{sectionV}

		In this paper, the policy that minimizes the \textit{long-term average cost} in {Problem} $\mathcal{P}1$ is referred to as ``optimal policy''. We compare it with the following two benchmarks:
		\begin{itemize}
			\item \textit{Zero wait}: An update is transmitted once the previous update is delivered, \emph{i.e.}, $Z_i=0$ for $\forall i$. This policy achieves the minimum delay and maximum throughput.
			The controlled action $a_i$ is \textit{age-aware} and is obtained by substituting $Z_i=0$ into the RHS of (\ref{eqfunction}) and similarly implementing the \textit{fixed-point iteration}, implying \textit{age-aware optimal control} under \textit{zero-wait} sampling.
			\item \textit{AoI-optimal}: The \textit{AoI-optimal} policy determines $Z_i$ by \cite[Theorem 4]{DBLP:journals/tit/SunUYKS17}, which is a threshold-based policy $Z_i=\max(0,\beta-Y_i)$, where $\beta$ is numerically solved by \cite[Algorithm 2]{DBLP:journals/tit/SunUYKS17}. The controlled action $a_i$ is \textit{age-aware} and is obtained by fixing the \textit{AoI-optimal} sampling in (\ref{eqfunction}) and similarly implementing the \textit{fixed-point iteration}, implying \textit{age-aware optimal control} under \textit{AoI-optimal} sampling.
		\end{itemize}
		
		As a case study, we consider the parameters detailed in Appendix \ref{appendixh} for the simulation setup. 
		
		Fig. \ref{fig:combinetwofigure} compares the benchmarks with ``optimal policy'' in terms of average age and cost. The left panel demonstrates that the \textit{AoI-optimal} policy consistently achieves the lowest \textit{age}. However, the right panel reveals a \textit{counterintuitive} result: the \textit{AoI-optimal} policy does not necessarily lead to the best decision-making performance. This suggests that the \textit{value-of-information} transcends mere age freshness; it is also shaped by the specific goal of the receiver and the semantic content of the information \cite{wang2023review,9551200,10560514,9475174,9919752,DBLP:journals/corr/abs-2311-11143,10579545,10562359,10409276}. 
		
		Fig. \ref{fig:converge} compares \textit{FPBI} with \textit{Bisec-MRVI}, where the latter is the benchmark. This benchmark is inspired by the \textit{two-layer} solution frameworks by \cite{DBLP:journals/tit/SunUYKS17,sun2019samplingwiener} and \cite{sun2019sampling}. The left panel records the trajectories of $h$ across algorithms iterations, illustrating that both \textit{Bisec-MRVI} and \textit{FPBI} converge to the optimal $h^*$. For \textit{Bisec-MRVI}, updates to $h$ are deponent on the \textit{outer-layer bisection search}, which occur only upon the convergence of the \textit{inter-layer} MRVI. In contrast, our developed \textit{one-layer} \textit{FPBI} eliminates the need for the \textit{outer-layer} \textit{bisection search} by fixing $U(h^*)=0$ and directly reestablishing \textit{fixed point equations} for $\mathscr{P}_{\mathrm{MDP}}(h^*)$ under this condition. The right panel records the distances to $h^*$ across algorithms iterations, which demonstrated that \textit{FPBI} algorithm achieves faster convergence compared\textit{ Bisec-MRVI}.

		\section{Conclusion}\label{sectionVI}
		In this paper, we have proposed a new remote MDP problem in the time-lag MDP framework, termed \textit{age-aware remote MDP}. Specifically, AoI, typically an optimization indicator to ensure information \textit{freshness}, has been introduced into the remote MDP problem as a specific category of controlled random processing delay and as important \textit{side information} to enhance remote decision-making. The main result of this work is that the age-aware remote MDP can be reformulated into a standard, \textit{delay-free} MDP. Under such an equivalent problem, we have established sufficient conditions for the existence of an optimal \textit{stationary deterministic} policy. Additionally, we have developed a low-complexity \textit{one-layer} algorithm to effectively solve this remote MDP problem. We revealed that the age-optimal policy, which ensures the freshest information, does not necessarily achieve the best remote decision making. In contrast, we can design goal-oriented sampling policy that directly optimize remote decision making.
	
	\appendices
	\section{Proof of Lemma \ref{l1}}\label{appendixa}
	\subsection{Proof of Sufficient Statistics}
	We begin with rewriting the conditional expectation:
	\begin{equation}\label{21}
	\begin{aligned}
	&\max_{a_{t:T}}\mathbb{E}\left[\sum_{k=t}^T \mathcal{C}(X_k, a_k)|\mathcal{I}_t\right]\\
	&=\max_{a_{t:T}}\sum_{k=t}^T \mathcal{C}(s', a_{k})\cdot\Pr(X_k=s'\mid\mathcal{I}_t).
	\end{aligned}
	\end{equation}	
	Because $X_k$ is a Markov decision process, the conditional probability $\Pr(X_k=s'\mid\mathcal{I}_t)$, where $\mathcal{I}_t$ is defined as	$\mathcal{I}_t\triangleq\{(X_{k-\Delta(k)},\Delta(k),a_{k-1}^S,a_{k-1}):{k\le t}\}$, is only dependent on the most recently available information $X_{t-\Delta(t)}$, the age of the most recently available information $\Delta(t)$, and the actions taken from time slot $t-\Delta(t)$ to time slot $t$, $a_{t-\Delta(t):t}$. Thus, we can rewrite the conditional probability in (\ref{21}) as:  
	\begin{equation}\label{22}\begin{aligned}
	&\Pr(X_k=s'\mid\mathcal{I}_t)\\
	&=\Pr\left(X_k=s'\mid X_{t-\Delta(t)},\Delta(t),a_{t-\Delta(t):t-1}\right).
	\end{aligned}
	\end{equation}
	For $t$ within the interval $[D_{i},D_{i+1})$, equation (\ref{eq1}) implies that $t-\Delta(t)=S_i$. This allows us to deduce the right-hand side (RHS) of (\ref{22}) as
	\begin{equation}
	\begin{aligned}
	&\Pr\left(X_k=s'\mid X_{S_i},t-S_i,a_{S_i:t-1}\right)\\
	&=\Pr\left(X_k=s'\mid X_{S_i},t-S_i,a_{S_i:D_i-1},a_{D_i:t-1}\right)\\
	&=\Pr\left(X_k=s'\mid X_{S_i},t-D_i+Y_i,A_{i-1},a_{D_i:t-1}\right).
	\end{aligned}
	\end{equation}
	Because $t$ and $D_i$ are both known to the decision maker during the interval $t\in[D_i,D_{i+1})$, \emph{i.e.}, $t$ and $D_i$ are \textit{deterministic}, the conditional probability can be rewritten as
	\begin{equation}\label{24}
	\begin{aligned}
	\Pr\left(X_k=s'\mid X_{S_i},Y_i,A_{i-1},a_{D_i:t-1}\right).
	\end{aligned}
	\end{equation}
	Substituting (\ref{24}) into (\ref{21}) yields
	\begin{equation}\label{25}\resizebox{\hsize}{!}{$
		\begin{aligned}
		&\max_{a_{t:T}}\mathbb{E}\left[\sum_{k=t}^T \mathcal{C}(X_k, a(k))|\mathcal{I}_t\right]=\\
		&\max_{a_{t:T}}\sum_{k=t}^T \mathcal{C}(s', a(k))\cdot\Pr\left(X_k=s'\mid X_{S_i},Y_i,A_{i-1},a_{D_i:t-1}\right).
		\end{aligned}$}
	\end{equation}
	Because $a_t$ is a constant during the interval $[D_i,D_{i+1})$, we have that \begin{equation}
	a_{D_i:t-1}=a_t, \text{for } t\in [D_i,D_{i+1}).
	\end{equation}
	We can thus rewrite the RHS of (\ref{25}) as
	\begin{equation}\label{eq27}
	\begin{aligned}
	&\max_{a_{t:T}}\sum_{k=t}^T \mathcal{C}(s', a_k)\cdot\Pr\left(X_k=s'\mid X_{S_i},Y_i,A_{i-1},a_t\right)\\
	&=\max_{a_{t:T}}\mathbb{E}_{a_{t:T}}\left[\sum_{k=t}^T \mathcal{C}(X(k), a_k)\mid X_{S_i},Y_i,A_{i-1}\right].
	\end{aligned}
	\end{equation}
	From Definition \ref{definition1} and we know that $X_{S_i},Y_i,A_{i-1}$ is a sufficient statistics of $\mathcal{I}_t$ for $t\in [D_i,D_{i+1})$.
	\subsection{Determining $a_t^S$ is equivalent to Determining $Z_i$}
	By combing (\ref{eq2}) and (\ref{eq4}), we have that the sequence $\{a_t^S\}_{t\in [D_{i},D_{i+1})}$ constitutes a one-to-one mapping either to the sampling time $S_{i+1}$ or to the waiting time $Z_i$. Therefore, determining the sequence $\{a_t^S\}_{t\in [D_{i},D_{i+1})}$ is equivalent to determine $S_{i+1}$ or $Z_i$.
	
	\section{Proof of Lemma \ref{l3}}\label{appendixb}
	\subsection{Proof of Part (i)}
	\subsubsection{$h^*\le\lambda \iff U(\lambda)\le0$} 
	If $h^*\le\lambda$, there exists a policy $\pi=(Z_0,A_0,Z_1,A_1,\cdots)$ such that \begin{equation}\label{eq22}
	\lim _{\mathrm{n} \rightarrow \infty} \frac{\sum_{i=0}^{n-1} \mathbb{E}_\pi\left[\sum_{t=\it{D}_{i}}^{\it{D}_{i+1}-1} \mathcal{C}(X_t, A_i)\right]}{\sum_{i=0}^{n-1} \mathbb{E}_\pi\left[Y_{i+1}+Z_i\right]}\le\lambda,
	\end{equation}
	which is equivalent to 
	\begin{equation}\label{fracless0}
	\resizebox{1\hsize}{!}{$\begin{aligned}
		\lim _{\mathrm{n} \rightarrow \infty} \frac{\frac{1}{n}\sum_{i=0}^{n-1} \mathbb{E}_\pi\left[\sum_{t=\it{D}_{i}}^{\it{D}_{i+1}-1} \mathcal{C}(X_t, A_i)\right]-\lambda\mathbb{E}_\pi\left[Y_{i+1}+Z_i\right]}{\frac{1}{n}\sum_{i=0}^{n-1} \mathbb{E}_\pi\left[Y_{i+1}+Z_i\right]}\le0.
		\end{aligned}$}
	\end{equation}
	Since  $Y_i>0$ and $0\le Z_i<\infty$, we have that $\lim_{\mathrm{n} \rightarrow \infty}\frac{1}{n}\sum_{i=0}^{n-1} \mathbb{E}\left[Y_{i+1}+Z_i\right]$ always exists, satisfying \begin{equation}0<\label{rege}\lim_{\mathrm{n} \rightarrow \infty}\frac{1}{n}\sum_{i=0}^{n-1} \mathbb{E}\left[Y_{i+1}+Z_i\right]<\infty\end{equation} for any policies. Thus, we have that the numerator of (\ref{fracless0}) satisfies\begin{equation}\label{25eq}
	\lim _{\mathrm{n} \rightarrow \infty}\frac{1}{n}\sum_{i=0}^{n-1} \mathbb{E}_\pi\left[\sum_{t=\it{D}_{i}}^{\it{D}_{i+1}-1} \mathcal{C}(X_t, A_i)\right]-\lambda\mathbb{E}_\pi[Z_i+Y_{i+1}]\le 0.
	\end{equation}
	This implies that the infimum of the left-hand side of (\ref{25eq}) is also at most $0$, \emph{i.e}, $U(\lambda)\le 0$.
	
	On the contrary, when $U(\lambda)\le 0$, we can know that there exists a policy $\pi=(Z_0,A_0,Z_1,A_1,\cdots)$ that satisfies (\ref{25eq}). As (\ref{rege}) always holds, we can easily obtain that (\ref{eq22}) holds. Note that $h^*$ is the infimum of the left hand side of (\ref{eq22}), we have \begin{equation}
	h^*\le\lim _{\mathrm{n} \rightarrow \infty} \frac{\sum_{i=0}^{n-1} \mathbb{E}_\pi\left[\sum_{t=\it{D}_{i}}^{\it{D}_{i+1}-1} \mathcal{C}(X_t, a_i)\right]}{\sum_{i=0}^{n-1} \mathbb{E}_\pi\left[Y_{i+1}+Z_i\right]}\le\lambda.
	\end{equation}
	\subsubsection{$h^*>\lambda \iff U(\lambda)>0$}
	If $h^*>\lambda$, we have that for any policy $(Z_0,A_0,Z_1,A_1,\cdots)$, the following inequality always holds
	\begin{equation}\label{27}
	\lim _{\mathrm{n} \rightarrow \infty} \frac{\sum_{i=0}^{n-1} \mathbb{E}\left[\sum_{t=\it{D}_{i}}^{\it{D}_{i+1}-1} \mathcal{C}(X_t, A_i)\right]}{\sum_{i=0}^{n-1} \mathbb{E}\left[Y_{i+1}+Z_i\right]}>\lambda.
	\end{equation}		
	Since (\ref{rege}) holds, we have that for any policy $(Z_0,A_0,Z_1,A_1,\cdots)$, 
	\begin{equation}\label{aaa}
	\lim _{\mathrm{n} \rightarrow \infty}\frac{1}{n}\sum_{i=0}^{n-1} \mathbb{E}\left[\sum_{t=\it{D}_{i}}^{\it{D}_{i+1}-1} \mathcal{C}(X_t, A_i)\right]-\lambda\mathbb{E}[Z_i+Y_{i+1}]> 0.
	\end{equation}
	Since (\ref{aaa}) holds for any policies, it follows that the infimum value of the left-hand side (LHS) of (\ref{aaa}) is also greater than $0$, implying that $h^*>\lambda$.
	
	When $U(\lambda) > 0$, it is established that condition (\ref{aaa}) is satisfied for any policy sequence $(Z_0, A_0, Z_1, A_1, \cdots)$. Given that (\ref{rege}) always holds, it follows directly that (\ref{27}) holds for any policies, implying that the infimum of the LHS of (\ref{27}) is also greater than $\lambda$, \emph{i.e.}, $h^*>\lambda$.
	
	\subsection{Proof of Part (ii)}
	Through the proof of Part (ii), we know that $U(\lambda)=0$ is equivalent to $\lambda=h^*$. If $U(\lambda)=0$, we know that there exists an optimal policy $\pi^*=(Z_0,A_0,Z_1,A_1,\cdots)$ for Problem $\mathcal{P}4$, satisfying that 
	\begin{equation}\label{29}
	\lim _{\mathrm{n} \rightarrow \infty}\frac{1}{n}\sum_{i=0}^{n-1} \mathbb{E}\left[\sum_{t=\it{D}_{i}}^{\it{D}_{i+1}-1} \mathcal{C}(X_t, A_i)\right]-\lambda(Z_i+Y_{i+1})= 0,
	\end{equation}
	which implies that for the policy $\pi^*$, 
	\begin{equation}\label{eq23}
	\lim _{\mathrm{n} \rightarrow \infty} \frac{\sum_{i=0}^{n-1} \mathbb{E}\left[\sum_{t=\it{D}_{i}}^{\it{D}_{i+1}-1} \mathcal{C}(X_t, A_i)\right]}{\sum_{i=0}^{n-1} \mathbb{E}\left[Y_{i+1}+Z_i\right]}=\lambda.
	\end{equation}
	Note that $h^*=\lambda$, we have that for the policy $\pi^*$, \begin{equation}
	h^*=\lim _{\mathrm{n} \rightarrow \infty} \frac{\sum_{i=0}^{n-1} \mathbb{E}\left[\sum_{t=\it{D}_{i}}^{\it{D}_{i+1}-1} \mathcal{C}(X_t, A_i)\right]}{\sum_{i=0}^{n-1} \mathbb{E}\left[Y_{i+1}+Z_i\right]},
	\end{equation}
	which infers that policy $\pi^*$ is also the optimal policy of Problem $\mathcal{P}3$.
	\subsection{Proof of Part (iii)}
	From Part (i), we know that proving Part (iii) is equivalent to prove that $U(\lambda)$ is monotonically non-increasing in terms of $\lambda$, \emph{i.e.}, for any $\Delta\lambda>0$, $U(\lambda+\Delta\lambda)\le U(\lambda)$. This is verified by the following inequalities:
	\begin{equation}
	\resizebox{1\hsize}{!}{$
		\begin{aligned}	
		& U(\lambda+\Delta\lambda)=\\
		&\inf_{\phi_{0:\infty}} \lim _{\mathrm{n} \rightarrow \infty}\frac{1}{n} {\sum_{i=0}^{n-1}\left\{ \mathbb{E}\left[\sum_{t=\it{D}_{i}}^{\it{D}_{i+1}-1} \mathcal{C}(X_t, A_i)\right]-(\lambda+\Delta\lambda)\mathbb{E} \left[Z_i+Y_{i+1}\right]\right\}}\\
		&=\inf _{\phi_{0:\infty}}\left\{\lim _{\mathrm{n} \rightarrow \infty}\frac{1}{n} {\sum_{i=0}^{n-1}\left\{ \mathbb{E}\left[\sum_{t=\it{D}_{i}}^{\it{D}_{i+1}-1} \mathcal{C}(X_t, A_i)\right]-\lambda\mathbb{E} \left[Z_i+Y_{i+1}\right]\right\}}\right.\\
		&\left.-\lim _{\mathrm{n} \rightarrow \infty}\frac{1}{n}\sum_{i=0}^{n-1} \Delta\lambda\mathbb{E} [Z_i+Y_{i+1}]       \right\}\\
		&\le \inf _{\phi_{0:\infty}}\lim _{\mathrm{n} \rightarrow \infty}\frac{1}{n} {\sum_{i=0}^{n-1}\left\{ \mathbb{E}\left[\sum_{t=\it{D}_{i}}^{\it{D}_{i+1}-1} \mathcal{C}(X_t, A_i)\right]-\lambda\mathbb{E} \left[Z_i+Y_{i+1}\right]\right\}}\\
		&=U(\lambda).
		\end{aligned}$}
	\end{equation}
	Thus, we have that $\lambda=h^*$ is the unique root of $U(\lambda)=0$.
	
	\section{Transition Probability of $\mathscr{P}_{\mathrm{MDP}}(\lambda)$}\label{appendixc}
	We first calculate $\Pr\left(X_{S_{i+1}}|\mathcal{G}_i,Z_i,A_i\right)$ $\Pr\left(Y_{{i+1}}|\mathcal{G}_i,Z_i,A_i\right)$, $\Pr\left(A_{{i+1}}|\mathcal{G}_i,Z_i,A_i\right)$, respectively here. 
	\subsection{$\Pr\left(X_{S_{i+1}}|\mathcal{G}_i,Z_i,A_i\right)$}\label{CA}
	From time $S_i$ to $S_{i+1}$, the dynamics of the source $X_t$ can to divided into two Markov Sources
	\begin{itemize}
		\item During the interval $S_i \leq t \leq D_i$, the action remains constant $A_{i-1} = a$. Throughout this period, the system's dynamics adhere to a Markov chain, defined by the transition probability matrix $\mathbf{P}_{A_{i-1}}$. The transitions within this interval are quantified as $D_i - S_i = Y_i = \delta$. Thus the transition probability can be expressed by a $\delta$-step transition probability matrix, denoted as $\mathbf{P}_a^{\delta}$.
		\item In the subsequent time slot where $D_i \leq t \leq S_{i+1}$, the action is consistently $A_i$. During this phase, the system continues as another Markov chain with transition probability matrix $\mathbf{P}_{A_i}$. The number of transitions occurring in this interval is $S_{i+1} - D_i = Z_i$, which leads to a $Z_i$-step transition probability matrix $\mathbf{P}_{A_i}^{Z_i}$.
	\end{itemize}
	
	Thus, the transition probability from $t=S_i$ to $t=S_{i+1}$ is the product of the $\delta$-step transition probability matrix $\mathbf{P}_a^{\delta}$ and the $Z_i$-step transition probability matrix $\mathbf{P}_{A_i}^{Z_i}$. This yields the transition probability as:
	\begin{equation}
	\begin{aligned}
	\Pr\left(X_{S_{i+1}}=s'|\mathcal{G}_i=(s,\delta,a),Z_i,A_i\right)=[\mathbf{P}_a^{\delta}\cdot\mathbf{P}_{A_i}^{Z_i}]_{s\times s'}.
	\end{aligned}
	\end{equation}
	\subsection{$\Pr(Y_{i+1}|\mathcal{G}_i,Z_i,A_i)$}
	Note that $Y_i$ is i.i.d random variable, $Y_{i+1}$ is independent of $\mathcal{G}_i$, $Z_i$, and $A_i$. Thus, the following transition probability is almost sure
	\begin{equation}
	\begin{aligned}
	&\Pr\left(Y_{i+1}=\delta'|\mathcal{G}_i=(s,\delta,a),Z_i,A_i\right)=\Pr(Y_{i+1}=\delta').
	\end{aligned}
	\end{equation}
	\subsection{$\Pr(A_i|\mathcal{G}_i,Z_i,A_i)$}
	The state $A_i$ is actually a record of the action taken. Thus, we have the following transition probability:
	\begin{equation}
	\Pr\left(A_i=a'|\mathcal{G}_i=(s,\delta,a),Z_i,A_i\right)=\mathbbm{1}\{a'=A_i\}.
	\end{equation}
	
	At last, by leveraging the conditional independence among $X_{S_{i+1}}$, $Y_{i+1}$, and $A_i$, $\Pr(\mathcal{G}_{i+1}|\mathcal{G}_i,Z_i,A_i)$ can be decomposed into the product of
	$\Pr(X_{S_{i+1}}|\mathcal{G}_i,Z_i,A_i)$, 
	$\Pr(Y_{i+1}|\mathcal{G}_i,Z_i,A_i)$, and
	$\Pr(A_i|\mathcal{G}_i,Z_i,A_i)$, and we will have the transition probability given in (\ref{tran}).  
	
	\section{Proof of Lemma \ref{l4}}\label{appendixd}
	If the cost function is defined as in (\ref{cfun}), the objective of the MDP $\mathscr{P}_{\mathrm{MDP}}(\lambda)$ is given as
	\begin{equation}\label{36}
	\begin{aligned}
	&\inf _{\phi_{0: \infty}} \limsup _{T \rightarrow \infty} \frac{1}{T} \mathbb{E}\left[\sum_{t=1}^T g(\mathcal{G}_i, A_i,Z_i;\lambda)\right]\\
	&=\inf _{\phi_{0: \infty}} \limsup _{T \rightarrow \infty} \frac{1}{T} \mathbb{E}\left[\sum_{t=1}^T q(\mathcal{G}_i, A_i,Z_i)-\lambda\mathbb{E}[Z_i+Y_{i+1}]\right].
	\end{aligned}
	\end{equation}
	Comparing (\ref{36}) with Problem $\mathcal{P}4$, our remaining focus is to prove that $\mathbb{E}\left[\sum_{t=\it{D}_{i}}^{\it{D}_{i+1}-1} \mathcal{C}\left(X_t, A_i\right)\right]=\mathbb{E}[q(\mathcal{G}_i,A_i,Z_i)]$. 
	
	$\mathbb{E}\left[\sum_{t=\it{D}_{i}}^{\it{D}_{i+1}-1} \mathcal{C}\left(X_t, A_i\right)\right]$ can be decomposed as:
	\begin{equation}\resizebox{1\hsize}{!}{$
		\begin{aligned}
		& =\mathbb{E}\left[\underset{\substack{Y_{i+1}, \\ X_{D_i:D_{i+1}-1}}}{\mathbb{E}}\left[\left.\sum_{t=\it{D}_{i}}^{\it{D}_{i+1}-1} \mathcal{C}\left(X_t, A_i\right) \right| \mathcal{G}(i)\right]\right] \\
		&=\mathbb{E}\left[\underset{\substack{Y_{i+1}, \\ X_{D_i:D_{i+1}-1}}}{\mathbb{E}}\left[\sum_{t=\it{D}_{i}}^{\it{D}_{i}+\mathrm{Z}_{i}+\mathrm{Y}_{i+1}-1} \mathcal{C}\left(X_t, A_i\right) \mid X_{S_i}, Y_i, A_{i-1}\right]\right],
		\end{aligned}$}
	\end{equation}
	where the conditional expectation can be calculated by 
	\begin{equation}\label{38}
	\resizebox{\hsize}{!}{$
		\begin{aligned}
		&\underset{\substack{Y_{i+1}, \\ X_{D_i:D_{i+1}-1}}}{\mathbb{E}}\left[\sum_{t=\it{D}_{i}}^{\it{D}_{i}+\mathrm{Z}_{i}+\mathrm{Y}_{i+1}-1} \mathcal{C}\left(X_t, A_i\right) \mid X_{S_i}, Y_i, A_{i-1}\right]=\\
		&\underset{\substack{Y_{i+1}}}{\mathbb{E}}\left[\sum_{t=\it{D}_{i}}^{\it{D}_{i}+\mathrm{Z}_{i}+\mathrm{Y}_{i+1}-1} \sum_{s'}\mathcal{C}\left(s', A_i\right)\Pr(X_t=s'\mid X_{S_i}, Y_i, A_{i-1}) \right].
		\end{aligned}$}
	\end{equation}
	
	Up to this point, our remaining focus is to calculate the conditional probability $\Pr(X_t=s'\mid X_{S_i}, Y_i, A_{i-1})$ for $t\in[D_i,D_{i+1}-1)$. Similar our the results about $\Pr(X_{S_{i+1}}\mid\mathcal{G}_i,Z_i,A_i)$ in Appendix \ref{CA}, we also analyze the conditional probability through two steps:
	\begin{itemize}
		\item During the interval $S_i \leq t \leq D_i$, the action is $A_{i-1}$. Throughout this period, the system's source is a Markov chain, defined by the transition probability matrix $\mathbf{P}_a$. The transitions within this interval are $D_i - S_i = Y_i $. Thus the transition probability can be expressed by a $\delta$-step transition probability matrix, denoted as $\mathbf{P}_{A_i-1}^{Y_i}$.
		\item In the subsequent time slot where $D_i \leq t \leq D_{i+1}-1$, the action $A_i$. During this period, the system is another Markov chain with transition probability matrix $\mathbf{P}_{A_i}$. The number of transitions in this interval is $t-D_i$, which leads to a $(t-D_i)$-step transition probability matrix $\mathbf{P}_{A_i}^{t-D_i}$.
	\end{itemize}
	
	Thus, the transition probability from time slot $X_{S_i}$ to $X_t$, where $D_i\le t\le D_{i+1}$ is exactly the product of the $Y_i$-step transition probability matrix $\mathbf{P}_{A_{i-1}}^{Y_i}$ and the ($t-D_i$)-step transition probability matrix $\mathbf{P}_{A_i}^{t-D_i}$, given as:
	\begin{equation}\label{39}
	\Pr(X_t=s'\mid X_{S_i}, Y_i, A_{i-1})=\left[\mathbf{P}_{A_{i-1}}^{Y_i}\cdot\mathbf{P}_{A_{i}}^{t-D_i}\right]_{X_{S_i}\times s'}.
	\end{equation}
	
	By substituting (\ref{39}) into (\ref{38}), applying the variable substitution as $t:=t-D_i$, and interchanging the summation terms, we have 
	\begin{equation}\label{40}
	\resizebox{\hsize}{!}{$
		\begin{aligned}
		&\underset{\substack{Y_{i+1}}}{\mathbb{E}}\left[\sum_{t=\it{D}_{i}}^{\it{D}_{i}+\mathrm{Z}_{i}+\mathrm{Y}_{i+1}-1} \sum_{s'}\mathcal{C}\left(s', A_i\right)\cdot\left[\mathbf{P}_{A_{i-1}}^{Y_i}\cdot\mathbf{P}_{A_{i}}^{t-D_i}\right]_{X_{S_i}\times s'} \right]\\
		&=\sum_{s'}\underset{\substack{Y_{i+1}}}{\mathbb{E}}\left[\sum_{t=0}^{\mathrm{Z}_{i}+\mathrm{Y}_{i+1}-1} \cdot\left[\mathbf{P}_{A_{i-1}}^{Y_i}\cdot\mathbf{P}_{A_{i}}^{t}\right]_{X_{S_i}\times s'} \right]\cdot\mathcal{C}\left(s', A_i\right).
		\end{aligned}$}
	\end{equation}
	
	Then, note that the RHS of (\ref{40}) is exactly the definition of $q(\mathcal{G}_i,A_i,Z_i)$ as in (\ref{15}), we complete the proof that $\mathbb{E}\left[\sum_{t=\it{D}_{i}}^{\it{D}_{i+1}-1} \mathcal{C}\left(X_t, A_i\right)\right]=\mathbb{E}[q(\mathcal{G}_i,A_i,Z_i)]$. Thus the objective of the MDP $\mathscr{P}_{\mathrm{MDP}}(\lambda)$ is equivalent to Problem $\mathcal{P}4$.

	\section{Proof of Theorem \ref{the1}}\label{appendixe}
	\subsection{Existence of Optimal Stationary Policy}
	According to \cite[Proposition 6.2.3]{sennott2009stochastic}, optimal stationary policies exist under conditions where the cost function remains bounded, and both the state and action spaces are finite. 
	
	In our context, both the state space and the action space are finite space naturally. Our remaining focus is to verify that $g(\mathcal{G}_i,Z_i,A_i;\lambda)$ is bounded. Given that $h^*$ (as shown in Lemma \ref{l5}), $\lambda$, which needs to $\lambda$ to approach $h^*$, is also bounded. Because $\lambda$ is bounded, the step cost function of the MDP $\mathscr{P}_{\mathrm{MDP}}(\lambda)$, given by $g(\mathcal{G}_i,Z_i,A_i;\lambda)$ in (\ref{cfun}), is also bounded. This leads to the existence of an optimal \textit{stationary} policy. 
	
	\subsection{Existence of Optimal Deterministic Policy}
	The sufficient condition of existence of an optimal \textit{deterministic} policy is that the MDP is a \textit{unichain} \cite[Theorem 8.4.5]{puterman2014markov}. Our result is that if an MDP characterized by finite state space $\mathcal{S}$, finite action space $\mathcal{A}$, and transition probability $\mathbf{P}_a$ is a \textit{unichain}, then its corresponding remote-MDP transformation $\mathscr{P}_{\mathrm{MDP}}(\lambda)$ is also a \textit{unichain}, \emph{i.e.}, the introduction of the observation delay $\Delta(t)$ does not change the \textit{unichain} property of the MDP.
	
	The condition that an MDP characterized by finite state space $\mathcal{S}$, finite action space $\mathcal{A}$, and transition probability $\mathbf{P}_a$ is a \textit{unichain} implies that the transition probability $\mathbf{P}_a$, for any given $a\in\mathcal{A}$, is a \textit{unichain} transition probability. We call a transition probability $\mathbf{P}_a$ is a \textit{unichain} if there is a single \textit{recurrent} class plus a possibly empty set of \textit{transients} states \cite[Section 8.3.1]{puterman2014markov}\footnote{A state $s$ is said to be \textit{transient} if, starting from $s$, there is a non-zero probability that the chain will never return to $s$, \emph{i.e.}, for any positive integer $n>0$, $p_{ss}^n=0$, where $p_{ss}^n$ is the probability of that the state stars from $s$ and ends at $s$ after $n$ steps. It is called \textit{recurrent} otherwise. A \textit{recurrent class} is a group of states that \textit{communicate} with each other, \emph{i.e.}, for any $i,j\in\mathcal{X}$ where $\mathcal{X}$ is a \textit{recurrent class}, there $\lim_{n\rightarrow\infty}p_{ij}^n>0$.}. We have the following lemma:
	\begin{lemma}\label{l6}
		The following statements are true:\\
		i) If $\mathbf{P}_a$ is a \textit{unichain}, then $\mathbf{P}_a^n$, where $n\in\mathbb{N}^+$ is also a unichain;\\
		ii) If $\mathbf{P}_a$ and $\mathbf{P}_b$ are both \textit{unichains}, then $\mathbf{P}_a\cdot\mathbf{P}_b$ is a \textit{unichain}.
	\end{lemma}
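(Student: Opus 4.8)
The plan is to work with an analytic restatement of the unichain property. Under the definition adopted in the footnote, a stochastic matrix $\mathbf{P}$ is a unichain when it has a single recurrent class on which $\lim_{n\to\infty}p_{ij}^{n}>0$; since this limit exists and is strictly positive on the recurrent class, that class is aperiodic, and the convergence of the one-step distributions from every initial state to the unique stationary distribution $\pi$ yields the compact characterization $\lim_{n\to\infty}\mathbf{P}^{n}=\mathbf{1}\pi^{\mathsf T}$, a rank-one matrix whose rows all equal $\pi$. Equivalently, some power $\mathbf{P}^{m}$ has a strictly positive column (indexed by any recurrent state), and then every higher power has one as well. First I would record this equivalence, since it makes both statements transparent.

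For statement (i), fix $k\in\mathbb{N}^{+}$ and write $(\mathbf{P}_a^{k})^{m}=\mathbf{P}_a^{km}$. Because $\mathbf{P}_a$ is a unichain, $\mathbf{P}_a^{n}\to\mathbf{1}\pi_a^{\mathsf T}$ as $n\to\infty$; the sequence $\{\mathbf{P}_a^{km}\}_{m}$ is a subsequence of $\{\mathbf{P}_a^{n}\}_{n}$ and hence converges to the same limit $\mathbf{1}\pi_a^{\mathsf T}$. Therefore $\lim_{m\to\infty}(\mathbf{P}_a^{k})^{m}=\mathbf{1}\pi_a^{\mathsf T}$ is rank one, so $\mathbf{P}_a^{k}$ is a unichain with the same stationary distribution $\pi_a$. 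This settles (i) with no further computation.

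For statement (ii), set $\mathbf{Q}=\mathbf{P}_a\mathbf{P}_b$, which is again stochastic. The clean reduction I would use is that $\mathbf{Q}$ is a unichain as soon as some power $\mathbf{Q}^{n_0}$ has a strictly positive column, say in position $j^{\ast}$. Indeed, a strictly positive column is preserved under left multiplication by any stochastic matrix, so $\mathbf{Q}^{n}$ has column $j^{\ast}$ strictly positive for every $n\ge n_0$; this makes $j^{\ast}$ accessible from every state, giving a single recurrent class as in the footnote, and it forces $[\mathbf{Q}^{n}]_{j^{\ast}j^{\ast}}>0$ for all $n\ge n_0$, so $j^{\ast}$ and hence the recurrent class is aperiodic. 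Thus both conclusions of (ii) follow from this one positivity statement.

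The hard part will be producing that first strictly positive column, i.e. a single state reachable from every state in a \emph{common} number of $\mathbf{Q}$-steps. The difficulty is the rigid alternation in $\mathbf{Q}=\mathbf{P}_a\mathbf{P}_b$: each factor funnels all states into its own aperiodic recurrent class, but the factors are interleaved, so one cannot simply group several copies of $\mathbf{P}_a$ (or of $\mathbf{P}_b$) to invoke the scrambling powers guaranteed by the first paragraph. I would attack this with the coefficient of ergodicity $\tau(\cdot)$, which is sub-multiplicative and satisfies $\tau(\mathbf{P})<1$ exactly when $\mathbf{P}$ is scrambling; the goal then becomes showing $\tau(\mathbf{Q}^{n_0})<1$ for some $n_0$, with the aperiodic-unichain hypotheses on $\mathbf{P}_a$ and $\mathbf{P}_b$ driving the estimate. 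A useful auxiliary observation for cornering the single-class part is the closure-propagation fact that any $\mathbf{Q}$-closed set $C$ satisfies $\mathbf{P}_b(u,C)=1$ for every $\mathbf{P}_a$-successor $u$ of $C$, which constrains how the recurrent structure of $\mathbf{Q}$ can sit inside those of $\mathbf{P}_a$ and $\mathbf{P}_b$. Once the positive column is in hand, the reduction above closes (ii).
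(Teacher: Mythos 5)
Your part (i) is correct and complete: under the paper's own reading of a recurrent class as a set on which $\lim_{n\to\infty}p_{ij}^{n}>0$ (which forces aperiodicity), the subsequence argument $(\mathbf{P}_a^{k})^{m}=\mathbf{P}_a^{km}\to\mathbf{1}\pi_a^{\mathsf T}$ is in fact cleaner than the paper's, which instead tracks transient states separately and builds return times $N_{ij}+kN_{ii}$ by hand. The reduction you set up for part (ii) --- a finite stochastic matrix is a unichain iff some power has a strictly positive column, and such a column persists under further left multiplication by stochastic matrices --- is also sound.

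The genuine gap is that part (ii) is never closed: you explicitly defer ``producing that first strictly positive column'' to a coefficient-of-ergodicity estimate $\tau(\mathbf{Q}^{n_0})<1$ that you do not carry out. That step cannot be carried out from the stated hypotheses, because statement (ii) is false in general. On $\mathcal{S}=\{1,2,3,4\}$ take the deterministic unichains
\begin{equation*}
\mathbf{P}_a=\begin{pmatrix}0&0&1&0\\0&0&0&1\\0&0&1&0\\0&0&1&0\end{pmatrix},\qquad
\mathbf{P}_b=\begin{pmatrix}1&0&0&0\\1&0&0&0\\1&0&0&0\\0&1&0&0\end{pmatrix},\qquad
\mathbf{P}_a\mathbf{P}_b=\begin{pmatrix}1&0&0&0\\0&1&0&0\\1&0&0&0\\1&0&0&0\end{pmatrix}.
\end{equation*}
Here $\mathbf{P}_a$ is an aperiodic unichain with single recurrent class $\{3\}$ and $\mathbf{P}_b$ one with single recurrent class $\{1\}$, yet $\mathbf{P}_a\mathbf{P}_b$ has the two absorbing states $1$ and $2$, hence two recurrent classes, and $\tau\big((\mathbf{P}_a\mathbf{P}_b)^{n}\big)=1$ for every $n$. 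Your instinct that the ``rigid alternation'' of the two factors is the obstruction was exactly right. Note that the paper's own proof of (ii) evades this only by silently writing $[(\mathbf{P}_a\mathbf{P}_b)^{n}]_{ij}=\sum_{k}p_{ik}^{n}(a)\,p_{kj}^{n}(b)$, i.e. $(\mathbf{P}_a\mathbf{P}_b)^{n}=\mathbf{P}_a^{n}\mathbf{P}_b^{n}$, which fails for non-commuting matrices; so the fix is not a sharper ergodicity estimate but an additional structural hypothesis (e.g.\ a state that is recurrent for, and reachable under, both chains, or the specific form $\mathbf{P}_a^{\delta}\mathbf{P}_{A_i}^{Z_i}$ actually needed in Theorem~\ref{the1}).
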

	\begin{proof}
		\textit{Proof of Part i):}\\
		We first prove that if $s$ is a \textit{transient} state for Markov chain $\mathbf{P}_a$, then it is also a \textit{transient} state for $n$-step Markov chain $\mathbf{P}_a^n$. This is verified by the definition of a \textit{transient} state:
		\begin{equation}
		p_{ss}^n\equiv0,\text{ for any }n\in\mathbb{N}^+.
		\end{equation}
		
		Next, we prove that if $\mathcal{X}$ is a \textit{recurrent class} for Markov chain $\mathbf{P}_a$, then it is also a \textit{recurrent class} for $n$-step Markov chain $\mathbf{P}_a^n$. For $i,j\in\mathcal{X}$, the $m$-step transition probability from state $i$ to $j$ can be calculated by:
		\begin{equation}\label{48}
		p_{ij}^m=\sum_{k\in\mathcal{S}}\sum_{r=1}^{m-1}p_{ik}^r\cdot p_{kj}^{m-r}.
		\end{equation}
		If $k$ is a \textit{transient state}, then $p_{ik}^r\equiv0$, thus (\ref{48}) can be reduced to \begin{equation}
		\begin{aligned}
		p_{ij}^m&=\sum_{k\in\mathcal{X}}\sum_{r=1}^{m-1}p_{ik}^r\cdot p_{kj}^{m-r}\\
		&\ge\sum_{r=1}^{m-1}p_{ii}^r\cdot p_{ij}^{m-r}.
		\end{aligned}
		\end{equation}
		Since $i,j\in\mathcal{X}$, we have that there exists positive integers $N_{ii}, N_{ij}\in\mathbb{N}^+$ such that $p_{ii}^{N_{ii}}>0$ and $p_{ij}^{N_{ij}}>0$. Thus we can choose $m=N_{ii}+N_{ij}$ to establish 
		\begin{equation}\label{50}
		p_{ij}^{N_{ii}+N_{ij}}\ge\sum_{r=1}^{m-1}p_{ii}^r\cdot p_{ij}^{m-r}\ge p_{ii}^{N_{ii}}\cdot p_{ij}^{N_{ij}}>0.
		\end{equation}
		From (\ref{50}) we can deduce by \textit{mathematical induction} to ensure that \begin{equation}p_{ij}^{N_{ij}+kN_{ii}}>0, \text{ for any $k\in\mathbb{N}^+$}.\end{equation} Thus, for any $i,j\in\mathcal{X}$ and Markov chain $\mathbf{P}_a^n$ with any $n$, we can choose $nt=N_{ij}+kN_{ii}$ to hold that 
		\begin{equation}
		\lim_{t\rightarrow\infty}p_{ij}^{nt}=\lim_{k\rightarrow\infty}p_{ij}^{N_{ij}+kN_{ii}}>0,
		\end{equation}
		which implies that $i$ and $j$ are also \textit{commutative} in the Markov chain $\mathbf{P}_a^n$. \\
		\textit{Proof of Part ii):}\\	
		To prove this, we first denote the \textit{recurrent class} of $\mathbf{P}_a$ and $\mathbf{P}_b$ as $\mathcal{X}(a)$ and $\mathcal{X}(b)$, respectively. We also use $p_{ij}^n(a)$ and $p_{ij}^b$ to represent the transition probability from state $i$ to state $j$ after $n$ transition of chain $\mathbf{P}_a$ and chain $\mathbf{P}_b$, and $p_{ij}^n(ab)$ to denote the the corresponding transition probability of the chain $\mathbf{P}_a\cdot\mathbf{P}_b$. We prove Part ii) in the following cases:
		\begin{itemize}
			\item $\mathcal{X}(a)=\mathcal{X}(b)$: In such a case, for any $i,j\in\mathcal{X}(a)$, we have that \begin{equation}\label{53}
			\lim_{n\rightarrow\infty}p_{ii}^{n}(a)>0, \lim_{n\rightarrow\infty}p_{ij}^{n}(b)>0.
			\end{equation} 
			Because \begin{equation}\label{54}\begin{aligned}
			\lim_{n\rightarrow\infty}p_{ij}^{n}(ab)&=\lim_{n\rightarrow\infty}\sum_{k\in\mathcal{S}}p_{ik}^n(a)\cdot p_{kj}^n(b)\\
			&\ge\lim_{n\rightarrow\infty}\sum_{k\in\mathcal{X}(a)}p_{ik}^n(a)\cdot p_{kj}^n(b)\\
			&\ge\lim_{n\rightarrow\infty} p_{ii}^n(a)\cdot p_{ij}^n(b)>0,
			\end{aligned}
			\end{equation}
			we have that $i$ and $j$ are also \textit{communicative} for the chain $\mathbf{P}_a\cdot\mathbf{P}_b$. Thus, $\mathcal{X}(a)$ is a \textit{recurrent class} of the chain $\mathbf{P}_a\cdot\mathbf{P}_b$. We next show that if $i\in\mathcal{S}-\mathcal{X}(a)=\mathcal{S}-\mathcal{X}(b)$ is a \textit{transient} state for the chains $\mathbf{P}_a$ and chain $\mathbf{P}_b$, then $i$ is also a \textit{transient} state for the chain $\mathbf{P}_a\cdot\mathbf{P}_b$, this is verified by
			\begin{equation}\label{55}\begin{aligned}
			p_{ii}^{n}(ab)=\sum_{k\in\mathcal{S}}p_{ik}^n(a)\cdot p_{ki}^n(b).
			\end{aligned}
			\end{equation}
			Because $i$ is a \textit{transient} state of the chain $\mathbf{P}_b$, we have that $p_{ki}^n(b)=0$, and thus we can deduce from (\ref{55}) that $p_{ii}^{n}(ab)=0$. This implies that $i$ is also a \textit{transient} state of the chain $\mathbf{P}_a\cdot\mathbf{P}_b$. Thus, in case where $\mathcal{X}(a)=\mathcal{X}(b)$, the \textit{recurrent class} and the \textit{transient} state of the chain $\mathbf{P}_a\cdot\mathbf{P}_b$ remains the same as that of $\mathbf{P}_a$ and $\mathbf{P}_b$. This implies that the chain $\mathbf{P}_a\cdot\mathbf{P}_b$ is also a \textit{unichain}.
			\item $\mathcal{X}(a)\ne\mathcal{X}(b)$: In this case, we will prove that for the new chain $\mathbf{P}_a\cdot\mathbf{P}_b$, the unique \textit{recurrent class} is $\mathcal{X}(a)\cup\mathcal{X}(b)$, and the other states from $\mathcal{S}-\mathcal{X}(a)\cup\mathcal{X}(b)$ are all \textit{transient} states. There are four cases to describe $i,j\in\mathcal{X}(a)\cup\mathcal{X}(b)$:\\
			\textbf{Case 1}: $i\in\mathcal{X}(a)$ and $j\in\mathcal{X}(b)$. Similar to (\ref{54}), we rewrite $p_{ij}^{n}(ab)$ as 	\begin{equation}\label{56}\begin{aligned}
			\lim_{n\rightarrow\infty}p_{ij}^{n}(ab)&=\lim_{n\rightarrow\infty}\sum_{k\in\mathcal{S}}p_{ik}^n(a)\cdot p_{kj}^n(b)\\
			&\ge\lim_{n\rightarrow\infty}\sum_{k\in\mathcal{X}(a)\cap\mathcal{X}(b)}p_{ik}^n(a)\cdot p_{kj}^n(b).
			\end{aligned}
			\end{equation}
			Because $k\in\mathcal{X}(a)\cap\mathcal{X}(b)$, we have \begin{equation}\label{57}
			\begin{aligned}
			\lim_{n\rightarrow\infty}p_{ik}^n(a)>0, \lim_{n\rightarrow\infty}p_{kj}^n(b)>0.
			\end{aligned}
			\end{equation}
			Substituting (\ref{57}) into (\ref{56}) yields $\lim_{n\rightarrow\infty}p_{ij}^{n}(ab)>0$.
			\\
			\textbf{Case 2}. $i\in\mathcal{X}(a)$ and $j\in\mathcal{X}(a)$. From (\ref{53}) and (\ref{54}) we have that $\lim_{n\rightarrow\infty}p_{ij}^{n}(ab)>0$.\\
			\textbf{Case 3}. $i\in\mathcal{X}(b)$ and $j\in\mathcal{X}(a)$. This is a symmetry consequence of {Case 1} and thus we have that $\lim_{n\rightarrow\infty}p_{ij}^{n}(ab)>0$.\\
			\textbf{Case 4}. $i\in\mathcal{X}(b)$ and $j\in\mathcal{X}(b)$. This is a symmetry consequence of {Case 2} and thus we have that $\lim_{n\rightarrow\infty}p_{ij}^{n}(ab)>0$.
			
			Thus, we can conclude that for any $i,j\in\mathcal{X}(a)\cup\mathcal{X}(b)$, $i$ and $j$ are \textit{communicative} with each other for the chain $\mathbf{P}_a\cdot\mathbf{P}_b$, and thus $\mathcal{X}(a)\cup\mathcal{X}(b)$ is a \textit{recurrent class} of the chain $\mathbf{P}_a\cdot\mathbf{P}_b$. We next prove that states from $\mathcal{S}-\mathcal{X}(a)\cup\mathcal{X}(b)$ are all \textit{transient} states for the chain $\mathbf{P}_a\cdot\mathbf{P}_b$. Suppose $i\in\mathcal{S}-\mathcal{X}(a)\cup\mathcal{X}(b)$, we can similarly obtain (\ref{55}) and prove that $p_{ii}^n(ab)=0$. Thus, we accomplish the proof.
		\end{itemize}
	\end{proof}
	\begin{corollary}\label{c1}
		If $\mathbf{P}_a$ and $\mathbf{P}_b$ are both unichain, then for any $m,n\in\mathbb{N}^+$, $\mathbf{P}_a^m\cdot\mathbf{P}_b^n$ is also a unichain.
	\end{corollary}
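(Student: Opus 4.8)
The plan is to derive Corollary \ref{c1} as an immediate consequence of the two parts of Lemma \ref{l6}, by composing them. The key observation is that $\mathbf{P}_a^m \cdot \mathbf{P}_b^n$ is a product of two matrices, each of which is itself a power of a unichain transition matrix; hence it suffices to first reduce each factor to a single unichain matrix and then apply the product closure rule.

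First I would invoke Part i) of Lemma \ref{l6}: since $\mathbf{P}_a$ is a unichain and $m\in\mathbb{N}^+$, the $m$-step matrix $\mathbf{P}_a^m$ is again a unichain. Applying the same statement to $\mathbf{P}_b$ with exponent $n$ shows that $\mathbf{P}_b^n$ is also a unichain. At this stage both factors $\mathbf{P}_a^m$ and $\mathbf{P}_b^n$ satisfy the hypotheses of Part ii) of Lemma \ref{l6}.

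Second, I would apply Part ii) of Lemma \ref{l6} with the roles of the two generic unichain matrices played by $\mathbf{P}_a^m$ and $\mathbf{P}_b^n$. Since both are unichains, their product $\mathbf{P}_a^m \cdot \mathbf{P}_b^n$ is a unichain, which is precisely the claim.

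As for the main obstacle, there is essentially none beyond recognizing the compositional structure, since Lemma \ref{l6} already establishes the two nontrivial closure properties, namely that the unichain property is preserved under taking powers and under products. The only point requiring a moment's care is confirming that Part ii) is stated for an arbitrary pair of unichain matrices rather than specifically for the single-step matrices $\mathbf{P}_a$ and $\mathbf{P}_b$; because the proof of Part ii) uses only the \emph{recurrent-class} and \emph{transient-state} decomposition and never the one-step nature of its inputs, it applies verbatim with $\mathbf{P}_a^m$ and $\mathbf{P}_b^n$ in place of $\mathbf{P}_a$ and $\mathbf{P}_b$.
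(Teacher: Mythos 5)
Your proof is correct and matches the paper's: the paper likewise derives the corollary directly from Lemma \ref{l6}, applying Part i) to reduce each power to a single unichain matrix and Part ii) to handle the product. Your explicit remark that Part ii) applies to arbitrary unichain transition matrices (not just one-step ones) is a reasonable point of care that the paper leaves implicit.
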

	\begin{proof}
		By leveraging the results in Lemma \ref{l6}, we accomplish the proof.
	\end{proof}
	\begin{lemma}\label{l7}
		If $A_t$ and $B_t$ are both unichains and are independent with each other, then $(A_t,B_t)$ is also a unichain.
	\end{lemma}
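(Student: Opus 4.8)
The plan is to mirror the structure of the proof of Lemma \ref{l6}, but to exploit the independence of $A_t$ and $B_t$ so that every statement about the joint chain reduces to a statement about one coordinate at a time. Let $\mathcal{X}_A$ and $\mathcal{X}_B$ denote the unique recurrent classes of $A_t$ and $B_t$, respectively. The joint chain $(A_t,B_t)$ evolves on the product state space, and because the two chains are independent its one-step kernel is the product $P^{A}_{ij}P^{B}_{kl}$; iterating, the $n$-step transition probability factorizes as $p_{(i,k)(j,l)}^{n}=p_{ij}^{n}(A)\cdot p_{kl}^{n}(B)$ for all joint states $(i,k)$ and $(j,l)$. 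This factorization is the workhorse of the whole argument, and I would establish it first.

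Next I would claim that $\mathcal{X}_A\times\mathcal{X}_B$ is a recurrent class of the joint chain. For any $(i,k),(j,l)\in\mathcal{X}_A\times\mathcal{X}_B$, the factorization gives $\lim_{n\to\infty}p_{(i,k)(j,l)}^{n}=\big(\lim_{n\to\infty}p_{ij}^{n}(A)\big)\cdot\big(\lim_{n\to\infty}p_{kl}^{n}(B)\big)$, and since $i,j\in\mathcal{X}_A$ and $k,l\in\mathcal{X}_B$ each factor on the right is strictly positive by the definition of a recurrent class. Hence $\lim_{n\to\infty}p_{(i,k)(j,l)}^{n}>0$, so every pair of states in $\mathcal{X}_A\times\mathcal{X}_B$ communicates and the set forms a single recurrent class.

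I would then show that every state outside $\mathcal{X}_A\times\mathcal{X}_B$ is transient. If $(i,k)\notin\mathcal{X}_A\times\mathcal{X}_B$, at least one coordinate is transient in its own chain; without loss of generality suppose $i$ is transient for $A_t$, so that $p_{ii}^{n}(A)\equiv0$ for all $n>0$. The factorization then forces $p_{(i,k)(i,k)}^{n}=p_{ii}^{n}(A)\cdot p_{kk}^{n}(B)=0$ for all $n>0$, which is precisely the defining condition of a transient state. Combining the two parts, the joint chain has the single recurrent class $\mathcal{X}_A\times\mathcal{X}_B$ together with a possibly empty set of transient states, and is therefore a unichain.

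The main obstacle I anticipate is not the algebra but the careful treatment of the \emph{mixed} states, namely those $(i,k)$ with, say, $i\in\mathcal{X}_A$ but $k$ transient, which must be classified as transient rather than recurrent; the factorization handles this cleanly because transience of a single coordinate is inherited by the joint state. A second point worth flagging is that the interchange $\lim_{n}\big(p_{ij}^{n}(A)\,p_{kl}^{n}(B)\big)=\big(\lim_{n}p_{ij}^{n}(A)\big)\big(\lim_{n}p_{kl}^{n}(B)\big)$ relies on both coordinate limits existing, which is guaranteed under the same limit-based notion of communication already adopted in Lemma \ref{l6}; I would invoke that convention here to keep the two lemmas consistent.
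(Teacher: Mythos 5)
Your proposal is correct and follows essentially the same route as the paper: both identify $\mathcal{X}_A\times\mathcal{X}_B$ as the unique recurrent class via the independence factorization $p_{(i,k)(j,l)}^{n}=p_{ij}^{n}(A)\,p_{kl}^{n}(B)$, and both show states outside the product are transient under the paper's convention that a transient state $s$ satisfies $p_{ss}^{n}=0$ for all $n$. Your transience step is in fact slightly more direct than the paper's (which routes through a Chapman--Kolmogorov decomposition over intermediate states), but the argument is the same in substance.
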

	\begin{proof}
		We denote the \textit{recurrent class} of $A_t$ and $B_t$ as $\mathcal{A}$ and $\mathcal{B}$, respectively. We next show that the unique \textit{recurrent class} of the chain $(A_t,B_t)$ is $\mathcal{A}\times\mathcal{B}$. We denote $p_{(\alpha,\beta)(\alpha',\beta')}^n$ as the probability that the Markov chain $A_t,B_t$ moves from $\alpha',\beta'$ to $(\alpha',\beta')$ after $n$ transitions, $p_{\alpha\alpha'}^n$ and $p_{\beta\beta'}^n$ as the corresponding probability of the Markov chain $A_t$ and $B_t$, respectively. Similar to (\ref{55}), we have the following inequality:
		
		For any $(\alpha,\beta),(\alpha',\beta')\in\mathcal{A}\times\mathcal{B}$, we have
		\begin{equation}
		\lim_{n\rightarrow\infty}p_{(\alpha,\beta)(\alpha',\beta')}^n=\lim_{n\rightarrow\infty}p_{\alpha\alpha'}^n\cdot p_{\beta\beta'}^n>0.
		\end{equation}
		
		Otherwise, if $(\alpha,\beta)\notin\mathcal{A}\times\mathcal{B}$, we have
		\begin{equation}\label{58}\begin{aligned}
		p_{(\alpha,\beta)(\alpha,\beta)}^n&=\sum_{(c,d)}\sum_{r=1}^{n-1}\cdot p_{(\alpha,\beta)(c,d)}^r\cdot p_{(c,d)(\alpha,\beta)}^{n-r}\\
		&=\sum_{(c,d)}\sum_{r=1}^{n-1}p_{\alpha c}^r\cdot p_{\beta d}^r\cdot p_{c \alpha}^{n-r}\cdot p_{d\beta}^{n-r}.
		\end{aligned}
		\end{equation}
		Because $(\alpha,\beta)\notin\mathcal{A}\times\mathcal{B}$, we have that $p_{c \alpha}^{n-r}\cdot p_{d\beta}^{n-r}=0$ and thus $p_{(\alpha,\beta)(\alpha,\beta)}^n=0$. This implies that $(A_t,B_t)$ is a \textit{unichain}.
	\end{proof}
	With Corollary \ref{c1} and Lemma \ref{l7} in hand, we can now analyze the Markov decision process $\mathscr{P}_{\mathrm{MDP}}(\lambda)$. We obtain from Corollary \ref{c1} that $X_{S_i}$ is a unichain given any actions $Z_i\in\mathcal{Z}$ and $A_i\in\mathcal{A}$. In addition, $Y_i$ is also a very special type of a \textit{unichain} since all the states in $\mathcal{Y}$ are \textit{communicative} with each other. Third, $A_{i-1}$ records the actions taken in the previous step, thus, for any policies $(A_0,A_1,A_2,\cdots,)$, the available actions construct a \textit{recurrent class} and the others are \textit{transient} states. Thus the chain $A_{i-1}$ is also a \textit{unichain}. Then, from Corollary \ref{c1} we have that for any policies $(Z_0,A_0,Z_1,A_1,\cdots,)$, $\mathcal{G}_i=(X_{S_i},Y_i,A_{i-1})$ is a \textit{unichain}. Thus, MDP $\mathscr{P}_{\mathrm{MDP}}(\lambda)$ is a \textit{unichain}. This implies the existence of optimal \textit{deterministic} policies.

	\section{Proof of Lemma \ref{l5}}\label{appendixf}
	\subsection{Proof of $h^*\ge\min_{s,a}\mathcal{C}(s,a)$}
	Because $\mathcal{C}(X_t,a_t)\ge\min_{s,a}\mathcal{C}(s,a)$ always hold, we have
	\begin{equation}\label{F1}
	\begin{aligned}
	h^*&= \inf _{\phi_{0: \infty}} \limsup _{T \rightarrow \infty} \frac{1}{T} \mathbb{E}\left[\sum_{t=1}^T \mathcal{C}(X_t, a_t)\right]\\
	&\ge\inf _{\phi_{0: \infty}}\limsup _{T \rightarrow \infty} \frac{1}{T} \mathbb{E}\left[\sum_{t=1}^T \min_{s,a}\mathcal{C}(s,a)\right].
	\end{aligned}
	\end{equation}
	Because $\min_{s,a}\mathcal{C}(s,a)$ is a constant and is independent with $t$, the RHS of (\ref{F1}) is equal to $\min_{s,a}\mathcal{C}(s,a)$, and thus we have
	\begin{equation}
	h^*\ge\min_{s,a}\mathcal{C}(s,a).
	\end{equation}
	\subsection{Proof of $h^*\le\min_a\sum_{s\in\mathcal{S}}{\pi}_a(s)\cdot \mathcal{C}(s,a)$}
	\begin{figure*}[b!]
		\hrulefill
		\begin{equation}\label{70}
		\begin{aligned}
		0
		&=\min _{A_i, Z_i}\big\{g(\gamma^{\mathrm{ref}},A_i,Z_i;h^*)+\mathbb{E}[W^*\left(\gamma'\right)|\gamma^{\mathrm{ref}},Z_i,A_i]\big\}
		=\min _{A_i, Z_i}\big\{q(\gamma^{\mathrm{ref}},A_i,Z_i)-h^*\cdot f(Z_i)+\mathbb{E}[W^*\left(\gamma'\right)|\gamma^{\mathrm{ref}},Z_i,A_i]\big\}\\
		&=\min _{A_i, Z_i}\left\{f(Z_i)\cdot\left(\frac{q(\gamma^{\mathrm{ref}},A_i,Z_i)+\mathbb{E}[W^*\left(\gamma'\right)|\gamma^{\mathrm{ref}},Z_i,A_i]}{f(Z_i)}-h^*\right) \right\},
		\end{aligned}\tag{69}
		\end{equation}
	\end{figure*}
	Consider a specific policy $\phi(a)=(Z_0,a,Z_1,a,\cdots)\in\Psi$, where $\Psi\triangleq\{\phi(a):a\in\mathcal{A}\}$. Because $\Psi$ is the subset of the policy space, we have that the infimum of long-term average cost over the policy space is less than the infimum of that over $\Psi$, satisfying
	\begin{equation}\label{42}
	\begin{aligned}
	h^*&= \inf _{\phi_{0: \infty}} \limsup _{T \rightarrow \infty} \frac{1}{T} \mathbb{E}\left[\sum_{t=1}^T \mathcal{C}(X_t, a_t)\right]\\
	&\le\inf _{\phi(a)\in\Psi}\limsup _{T \rightarrow \infty} \frac{1}{T} \mathbb{E}\left[\sum_{t=1}^T \mathcal{C}(X_t,a_t)\right]\\
	&=\inf _{\phi(a)\in\Psi}\limsup _{T \rightarrow \infty} \frac{1}{T} \mathbb{E}\left[\sum_{t=1}^T \mathcal{C}(X_t,a)\right],
	\end{aligned}
	\end{equation}
	
	For a given constant action $a\in\mathcal{A}$, $X_t$ reduces to a Markov chain with transition probability $\mathbf{P}_a$, and thus the long-term average cost of the Markov chain $\limsup _{T \rightarrow \infty}\frac{1}{T} \mathbb{E}\left[\sum_{t=1}^T \mathcal{C}(X_t,a)\right]$ can be expressed by
	\begin{equation}\label{44}
	\sum_{s\in\mathcal{S}}{\pi}_a(s)\cdot \mathcal{C}(s,a),
	\end{equation}
	where $\pi_a(s)$ is the stationary of the state $s$. Denote $\boldsymbol{\pi}_a$ as the vector consisting of $\pi_a(s), s\in\mathcal{S}$, we have $\boldsymbol{\pi}_a\cdot\mathbf{P}_a=\boldsymbol{\pi}_a$. Then, substituting (\ref{44}) into (\ref{42}) yields the upper bound.
	\section{Proof of Theorem \ref{the2}}\label{appendixg}
	From \cite[Proposition 7.4.1]{bertsekas2012dynamic}, we know that for any $\lambda$, the optimal value of Problem $\mathcal{P}4$, which is $U(\lambda)$, is the same for all initial states and some values $V^*(\gamma;\lambda), \gamma\in\mathcal{S}\times\mathcal{Y}\times\mathcal{A}$ and satisfies the following Bellman equation:
	\begin{align}\label{64}
	&V^*(\gamma;\lambda)+U(\lambda)=\nonumber\\&\min _{A_i, Z_i}\big\{g(\gamma,A_i,Z_i;\lambda)+ \mathbb{E}[V^*\left(\gamma';\lambda\right)|\gamma,A_i,Z_i]\big\},
	\end{align}
	Substituting $\lambda=h^*$ and $U(h^*)=0$ into the Bellman equation, and (\ref{64}) transforms to
	\begin{align}\label{65}
	&V^*(\gamma;h^*)=\nonumber\\&\min _{A_i, Z_i}\big\{g(\gamma,A_i,Z_i;h^*)+ \mathbb{E}[V^*\left(\gamma';h^*\right)|\gamma,A_i,Z_i]\big\},
	\end{align}
	Similar to the RVI algorithm, we introduce the \textit{relative value function} defined as 
	\begin{equation}\label{66}
	W^*(\gamma)\triangleq V^*(\gamma;h^*)-V^*(\gamma^{\mathrm{ref}};h^*),
	\end{equation}
	where $\gamma^{\mathrm{ref}}$ is called \textit{reference state} and can be arbitrarily chosen from space $\mathcal{S}\times\mathcal{Y}\times\mathcal{A}$. Then, substituting (\ref{66}) into (\ref{65}) yields
	\begin{equation}\label{67}
	\begin{aligned}
	W^*(\gamma)=\min _{A_i, Z_i}\big\{g(\gamma,A_i,Z_i;h^*)+\mathbb{E}[W^*\left(\gamma'\right)|\gamma,Z_i,A_i]\big\},
	\end{aligned}
	\end{equation}
	Applying $\gamma=\gamma^{\mathrm{ref}}$ in (\ref{66}) and (\ref{67}) leads to
	\begin{equation}\label{68}
	W^*(\gamma^{\mathrm{ref}})=0,\forall \gamma\in\mathcal{S}\times\mathcal{Y}\times\mathcal{A},
	\end{equation}
	and 
	\begin{equation}\label{69}
	\begin{aligned}
	&W^*(\gamma^{\mathrm{ref}})=\\&\min _{A_i, Z_i}\big\{g(\gamma^{\mathrm{ref}},A_i,Z_i;h^*)+\mathbb{E}[W^*\left(\gamma'\right)|\gamma^{\mathrm{ref}},Z_i,A_i]\big\},
	\end{aligned}
	\end{equation}
	Then, substituting (\ref{68}) and $g(\gamma^{\mathrm{ref}},A_i,Z_i;h^*)=q(\gamma^{\mathrm{ref}},A_i,Z_i)-h^*\cdot f(Z_i)$ into (\ref{69}) yields (\ref{70}).	
	\stepcounter{equation}
	
	Because $f(Z_i)>0$, we have that (\ref{70}) holds only if
	\begin{equation}\label{71}
	\min_{A_i,Z_i}\left\{\frac{q(\gamma^{\mathrm{ref}},A_i,Z_i)+\mathbb{E}[W^*\left(\gamma'\right)|\gamma^{\mathrm{ref}},Z_i,A_i]}{f(Z_i)}-h^*\right\}=0.
	\end{equation}
	Moving $h^*$ to the RHS of (\ref{71}) yields
	\begin{equation}
	h^*=\min_{A_i,Z_i}\left\{\frac{q(\gamma^{\mathrm{ref}},A_i,Z_i)+\mathbb{E}[W^*\left(\gamma'\right)|\gamma^{\mathrm{ref}},Z_i,A_i]}{f(Z_i)}\right\}.
	\end{equation}
	We thus accomplish the proof.
	
	\section{Parameter Setting of Simulations}\label{appendixh}
	We consider the following parameter setup in the case study: 
	\begin{itemize}
		\item The \textit{state space} is a binary space: $\mathcal{S}=\{0,1\}$.
		\item The \textit{action space} is a binary space: $\mathcal{A}=\{0,1\}$.
		\item The \textit{transition probability matrix} of $X_t$ is \begin{equation}
		\mathbf{P}_0=\begin{bmatrix}
		0.9 &0.1\\0.1 &0.9
		\end{bmatrix}, \mathbf{P}_1=\begin{bmatrix}
		0.6 &0.4\\0.01 &0.99
		\end{bmatrix}.
		\end{equation}
		\item The cost function $\mathcal{C}(X_t,a_t)$ is given as \begin{equation}
		\mathcal{C}(0,0)=40,\mathcal{C}(0,1)=60,\mathcal{C}(1,0)=0,\mathcal{C}(1,1)=20.
		\end{equation}
		\item The delay is considered as a binary random delay with $\Pr(Y_i=1)=p$ and $\Pr(Y_i=10)=1-p$.
	\end{itemize}

	\bibliographystyle{IEEEtran}
	\bibliography{reference}
	
\end{document}